\newtheorem{Definition}{Definition}
\newtheorem{Problem}{Problem}
\newtheorem{Lemma}{Lemma}
\newtheorem{Algorithm}{Algorithm}
\newtheorem{Baseline}{Baseline}
\newtheorem{Example}{Example}
\newtheorem{Model}{Model}
\title{Joint Content Delivery and Caching Placement via Dynamic Programming}
\author{Bojie Lv, Lexiang Huang and Rui Wang\thanks{Bojie Lv, Lexiang Huang and Rui Wang are with Department of Electrical and Electronic Engineering, The Southern University of Scienece and Technology, China, Email: \{lvbj@mail.sustc.edu.cn, huanglx@mail.sustc.edu.cn, wang.r@sustc.edu.cn\}
		
Part of this work has been submitted to IEEE ICC 2018 \cite{WANG2017}. We have extended the conference paper by including the learning-based algorithm in Section IV-C and proactive scheduling algorithm in Section V, and adding more illustrative simulation results.}}
\begin{document}

\maketitle
\begin{abstract}
In this paper, downlink delivery of popular content is optimized with the assistance of wireless cache nodes. Specifically, the requests of one file is modeled as a Poisson point process with finite lifetime, and two downlink transmission modes are considered: (1) the base station multicasts file segments to the requesting users and selected cache nodes; (2) the base station proactively multicasts file segments to the selected cache nodes without requests from users. Hence the cache nodes with decoded files can help to offload the traffic upon the next file request via other air interfaces, e.g. WiFi. Without proactive caching placement, we formulate the downlink traffic offloading as a Markov decision process with random number of stages, and propose a revised Bellman's equation to obtain the optimal control policy. In order to address the prohibitively huge state space, we also introduce a low-complexity sub-optimal solution based on linear approximation of the value functions, where the gap between the approximated value functions and the real ones is bounded analytically. The approximated value functions can be calculated from analytical expressions given the spatial distribution of requesting users. Moreover, we propose a learning-based algorithm to evaluate the approximated value functions for unknown distribution of requesting users. Finally, a proactive caching placement algorithm is introduced to exploit the temporal diversity of shadowing effect. It is shown by simulation that the proposed low-complexity algorithm based on approximated value functions can significantly reduce the resource consumption at the base station, and the proactive caching placement can further improve the performance.
\end{abstract}

\section{introduction}\label{sec:intro}
Caching is a promising technology to improve the network spectral efficiency \cite{Leung2014} or cut down energy consumption \cite{ISWCS,Vincent2013} by predicting the potential content requests of users. In this paper, we consider a flexible deployment scenario where there is no wired connection or dedicated spectrum between the base station (BS) and cache nodes. The cache nodes have to update their buffers via downlink, sharing the same transmission resources with ordinary users. Hence the joint content delivery and caching placement problem shall be addressed to minimize the downlink resource consumption.

\subsection{Related Works}

With the wired connection between cache nodes and BSs, the buffer capacity limitation at the cache nodes becomes the major performance bottleneck, and there have been a number of research efforts spent on the file placement of the cache nodes. For example in \cite{Debbah2014}, it was shown that prediction of future demand will improve the performance of file placement at the cache nodes. The authors in\cite{B.xia} evaluated the performance of wireless heterogeneous networks, where the small BSs (cache nodes) cache the most popular files under the buffer capacity limitation at the cache nodes. Alternatively, the authors in\cite{W.Choi2016} showed that caching the files randomly with optimized probabilities is better than saving the most popular files when each user can be served by multiple cache nodes. In order to offload more traffics to cache nodes, a file placement algorithm was derived in \cite{Leung2016} with the constrains of cache nodes' capacity and backhaul rates between the cache nodes and macro BS. In cooperative interference networks, the authors proposed a short-term MIMO precoding design and a long-term file placement policy to minimize the transmission power \cite{Vincent2013}. In\cite{coded_cache_1,M.Tao2017}, the authors proposed multicast algorithms with coded caching scheme. All the above works assumed that there are wired links between the cache nodes and BSs. However, it might be costly to deploy cache nodes with wired connection in some areas, and hence the caching placement via wireless links (e.g., cellular downlink) should also be studied \cite{Survey2015}. In \cite{WCNC2017}, the transmissions from BS to cache nodes and from cache nodes to users shared the same spectrum. However, it neglected the possibility that both cache nodes and users could listen to the BS simultaneously via a multicast mode. In fact, if file caching has to be made via downlink, the phase of caching placement can be coupled with the phase of content delivery. For example, at the first few transmissions of one file, both cache nodes and requesting users could listen to the BS simultaneously; and the cache nodes can help to forward the file as long as they have been able to decode it. Thus the downlink strategy should be optimized spanning the whole lifetime of a file (including the transmission to both cache nodes and users). However, this has not been addressed by the existing literature.

When the caching placement shares the same transmission resource as the downlink of ordinary users, there will be a multi-stage optimization issue. For example, if more transmission resource is spent on caching placement in the previous stages, more traffic may be offloaded in the following stages, and vice versa. Thus the transmission dynamics in time domain should be captured to minimize the overall downlink resource consumption, and the method of dynamic programming can be utilized. In fact, the dynamic programming via Markov decision process (MDP) has been considered in the delay-aware resource allocation of wireless systems. For example, the infinite-horizon MDP has been used to optimize the cellular uplink \cite{Moghadari2013,cui2010} and downlink transmissions \cite{cui2017_01}, and relay networks\cite{Wang2013}, where the average transmission delay is either minimized or constrained. Moreover, the low-complexity algorithm design is usually considered in the above works to avoid the curse of dimensionality \cite{Shewhart2011Approximate}. However, the popular files to be buffered at the cache nodes should usually have a finite lifetime, and the infinite-horizon MDP may not be suitable in modeling anymore. Nevertheless, the MDP with finite stages is usually more complicated \cite{FHMDP}, and it is still an open issue on the low-complexity algorithm design with finite-horizon MDP. 

\subsection{Our Contributions}

In this paper, we consider the downlink file transmission with the assistance of cache nodes, where the caching placement shares the downlink resource with ordinary users. Specifically, two downlink transmission modes are considered: (1) the BS multicasts file segments to the requesting users and selected cache nodes; (2) the BS proactively multicasts file segments to the selected cache nodes without requests from users (namely proactive caching placement). The communication links between cache nodes and users are via different spectrum from the downlink (e.g., Wi-Fi) as \cite{May2016,Molisch2016}, and the BS tries to minimize the average downlink resource consumption by offloading traffic to cache nodes. The main contributions of this work are summarized below:
\begin{itemize}
	\item Without proactive caching placement, we model the downlink transmission of one file within his lifetime as a Markov decision process (MDP) with random number of stages. Note that this is not conventional MDP problem with finite and fixed number of stages, we propose a revised version of Bellman's equation, where the optimal control policy can be obtained given the value functions. Then we introduce a linear approximation on the value functions so that the exponential complexity can be reduced into linear. With the knowledge of spatial distribution of requesting users, the approximated value functions can be calculated via analytically expression; whereas, a learning algorithm is also introduced to evaluate the approximated value functions with unknown distribution of requesting users.
	
	\item The bounds on the approximation error of value functions are also obtained. Based on it, we further derive the bounds on the overall average transmission cost of the BS.
	
	\item A per-stage optimization approach is proposed to include the proactive caching placement into the above optimization framework.
\end{itemize}
It is shown by simulation that the proposed low-complexity algorithm based on approximated value functions can significantly reduce the resource consumption at the base station, and the proactive caching placement can further improve the performance.

The remainder of this paper is organized as follows. In Section \ref{sec:model}, the system model is introduced. In Section \ref{sec:opt}, we formulate the downlink resource allocation without proactive caching placement as a MDP with random number of stages, and introduce the optimal solution. In Section \ref{sec:approximation}, a linear approximation is proposed to the value functions, and the bound on approximation error is derived. In Section \ref{sec:proactive}, a per-stage optimization approach is proposed to include the proactive caching placement. The numerical simulation is provided in Section \ref{sec:sim}, and the conclusion is drawn in Section \ref{sec:con}.

\section{System Model}\label{sec:model}

\begin{figure}[tb]
	\centering
	\includegraphics[scale = 0.5]{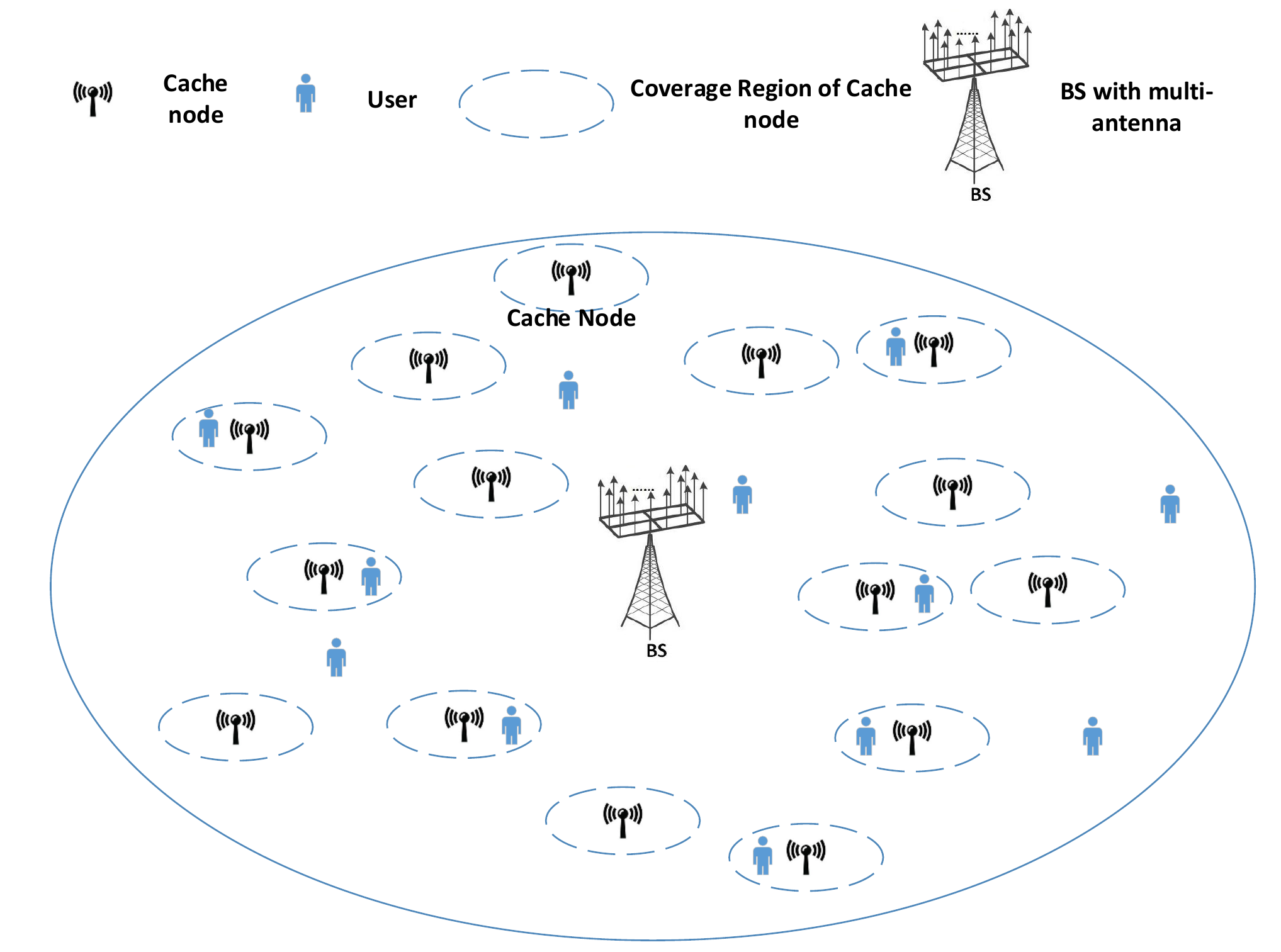}
	\caption{Illustration of network model with one BS and multiple wireless cache nodes.}
	\label{fig:scheme}
\end{figure}

\subsection{File Request Model}

As illustrated in Fig. \ref{fig:scheme}, the downlink transmission in a cell with one multi-antenna BS, $N_C $ single-antenna cache nodes is considered. Let $ N_T $ be the number of antennas at the BS. Without loss of generality, it is assumed that the BS locates in the origin, and the locations of cache nodes are denoted as $ \mathbf{c}_1, \mathbf{c}_2, ..., $ and $ \mathbf{c}_{N_C} $ respectively, which can be arbitrary in the cell coverage. Let $ \mathcal{C}_i $ ($ \forall i=1,2,...,N_C $) be the coverage region of the $ i $-th cache node, and $ \mathcal{C} = \mathcal{C}_i \cup \mathcal{C}_i \cup ... \cup \mathcal{C}_{N_C} $.

It is reported in the existing literature that a significant portion of the downlink data may be requested by multiple users of one cell, e.g. popular videos, web pages and etc, which can be buffered at the cache nodes for future transmissions. This type of data is usually organized as files. We shall focus on the offloading of downlink resource for the these files. However, we do not exclude the possibility that BS may simultaneously transmit data dedicated to one certain user, which is not necessarily buffered at the cache nodes. Without loss of generality, it is assumed that each file consists of $ R_F $ information bits, which are further divided into $ N_S $ segment equally. 

In most of the existing literature on wireless caching networks, the popularity of downlink files is modeled by an access probability, which may follow the Zipf distribution \cite{zipf_1}. This model is applicable when the caching placement and content delivery to users are scheduled with separated resource (i.e. different time or spectrum). However, these two phases are coupled in this paper. Specifically, the scenario we consider is described below:
\begin{itemize}
	\item There is no wired or dedicated wireless connection between cache nodes and BS. Hence the cache nodes can only receive files via downlink. 
	
	\item The downlink files can be requested by users since it is generated, and the downlink transmission to users and cache nodes can happen simultaneously. Hence there is no dedicated phase of caching placement as assumed in the existing literature.
\end{itemize}
For instance, users may not find the desired file from the nearby cache nodes in the very beginning of the file's lifetime (the cache nodes have not decoded the file). After certain times of downlink file transmission, the situation can be different. This is because some cache nodes may have successfully decoded the above file. Thus the following file transmission is strongly related to the previous transmissions. In order to capture this temporal dynamics of the coupled user requesting and caching placement, we introduce the following file request model based on Poisson point process (PPP).

\begin{Model}[File Request Model]
	
For the elaboration convenience, it is assumed that each file has the same lifetime $ T $. Suppose the $ f $-th file ($ f=1,2,... $) is available for access since time instance $ t_f $, we consider the requests on this file happening during the time period $ [t_f, t_f+T] $. The Poisson point process is adopted to model the event of file request within a file's life time, and $ \lambda_f $ is the process intensity of the $ f $-th file. Hence the probability mass function of request number on the $ f $-th file ($ \forall f $), when the remaining lifetime is $ T_{rem} $, is given by
\begin{equation} \label{eqn:request}
\Pr(\mbox{Request Number} = n) = \frac{(\lambda_f T_{rem})^n}{n!} e^{-\lambda_f T_{rem}}.
\end{equation}

The locations of new requesting users are independently and identically (i.i.d.) distributed following certain spatial distribution in the cell coverage. It is assumed that users' locations do not change during the file transmission, and the users become inactive after receiving their desired files.
\end{Model}

Note that Poisson point process has been widely used to model the phone call at an exchange. Moreover, given a time duration, the above model could provide an probability that one file is requested by at least one user, which degenerates to the access model in the existing literature. For example, the probability that the $ f $-th file is requested by at least one user within a time duration $ \tau  $ is given by 
\begin{equation*}
P_f = 1 - \sum_{n=1}^{\infty} \frac{(\lambda_f \tau)^n}{n!} e^{-\lambda_f \tau}.
\end{equation*}

In this paper, we shall first assume that the spatial distribution of new requesting users is known to the BS, and derive a low-complexity resource scheduling algorithm. Then we shall continue to consider the scenario that the BS does not know the distribution of new requesting users, and introduce a learning-based algorithm to adapt the proposed low-complexity resource scheduling algorithm.

\subsection{File Transmission with Wireless Caching}

Since the cache nodes rely on downlink to obtain the popular files and each file can be requested at the very beginning of its lifetime, the phases of caching placement and user requesting are coupled, and it is necessary to consider them jointly (Note that these two phases are usually considered in a separated way in most of the existing literature \cite{AAAA,tao2017}). Specifically, there are three types of communication links in the cellular networks with wireless cache nodes, which are introduced below.
\begin{itemize}
	\item {\bf Requested downlink multicast:} the BS delivers the requested file segments to user and the chosen cache nodes simultaneously. This will happen when the requested segments cannot be found in the nearby cache nodes. For example, if one user is within the coverage of $ \mathcal{C}_i $, it will receive the file segments from the BS when these file segments cannot be found in the $ i $-th cache node. Since both requesting user and cache nodes can listen when the BS is transmitting, the transmission mode is multicast.
	
	\item  {\bf Device-to-device (D2D) communications:} the cache nodes forward the requested file segments to users. This will happen when the requested segments hit the buffer of nearby cache nodes. For example, if one user is within the coverage of $ \mathcal{C}_i $, it will receive the file segments from the $ i $-th cache node when the file segments can be found there. This communication is made directly from cache nodes to users. It can use Wi-Fi, bluetooth, or other air interfaces, which is not in the same spectrum as downlink. For example, the offloading from cellular to Wi-Fi has attracted a number of attenuations \cite{May2016} \cite{Molisch2016}.
	
	\item {\bf Proactive caching placement:} the BS delivers some file segments to cache nodes without the request from users. It is assumed that the BS will determine whether to deliver one file segment, which is still in its lifetime, to cache nodes proactively in every $ T_p $ seconds.
\end{itemize}

Without proactive caching placement, the first transmission of each file is made directly from the BS, and some cache nodes may also be able to decode the whole file or some segments. For the successive requests of the same file, user will receive one segment from the cache node if the following two conditions are satisfied: (1) the user is in the coverage of certain cache node; (2) the aforementioned file segment has been successfully decoded by the aforementioned cache node. If more transmission resource (transmission power or symbols) is spent in the previous rounds of file transmission, more cache nodes would be able to buffer the file, which may save the transmission resource of the BS in the future. Clearly, it is necessary to exploit the correlation among different transmissions of the same file in downlink resource allocation. Furthermore, if proactive caching placement is allowed, the BS may multicast certain file segments to the chosen cache nodes according to their downlink channels even without the requests from users. This decision might be made if the BS believes the proactive caching placement can save more transmission resource in the future transmission.

\begin{Example}[Downlink File Delivery via Wireless Caching]
Consider the following example: the downlink file is divided into two segments; there are two users A and B, which will raise the requests on this file in sequence; and User B is in the coverage of cache node (say Cache Node C). Suppose when User A requests the file, the Cache Node C has not buffered it. The BS chooses User A and Cache Node C as the downlink receivers. Hence when the User B requests the same file, it can be served by Cache Node C.
	
Moreover, if the proactive caching placement is allowed, the BS may find the channel to the Cache Node C is very good, and deliver the first segment of the file to the Cache Node C before the request from User A. When the User A requests the first file, it can obtain the first segment from the Cache Node C and another segment from the BS. 
\end{Example}

Since the data rate of wireless communications is usually much smaller than wired, and the received files can be removed from the buffer of cache nodes when it is expired, we ignore the limitation on capacity  of cache buffer in this paper \footnote{For example, suppose that one BS is transmitting downlink files with overall data rate of $ 1 $ Gbps, and the lifetime of each file is $ 24 $ hours. Then the maximum required storage capacity of one cache node is around $ 10 $T bytes, which is a mild requirement for ignoring the buffer capacity limitation. }. Moreover, since the D2D communications can be done distributively and parallelly with relatively low transmission power, we shall focus on the transmission resource consumption at the BS only, which is the bottleneck of the overall system.

\subsection{Downlink Physical Layer Model}

In downlink, the receivers include the requesting user and cache nodes, and the space-time block code (STBC) with full diversity is used at the BS to facilitate the multicast communications. There are two main reasons that STBC is chosen for downlink multicast: (1) the BS need not to collect channel state information (CSI); (2) the full diversity can be achieved at all the receivers. 

It has been well studied in the existing literature that the user selection strategy in downlink frames (the typical duration of one frame is $ 10 $ milliseconds) will affect the transmission delay. For example, the delay-aware downlink control has been investigated in \cite{cui2012},\cite{Chamola2017}. In this paper, however, we shall consider the resource allocation in the time scale of file segment transmission, which consists of thousands of frames. In order to avoid the prohibitively complicated optimization on multiple time scales, we make the following simple and practical assumption: the transmission of one file segment is within the coherent time of shadowing attenuation, and the shadowing attenuation for the transmission of different segments of one file is i.i.d. Based on this assumption, we shall consider the optimization of cache node selection, transmission power level and transmission time of each file segment for both requested downlink multicast and proactive caching placement. Note that when the above three parameters are determined for each segment transmission, the user selection for each frame can be further considered in the finer time scale to optimize the average transmission delay, which is outside the scope of this paper.

In the requested downlink multicast, we shall refer to the user, which raises the $ n $-th request on the $ f $-th file, as the $ (f,n) $-th user, and refer to the $ s $-th segment of the $ f $-th file as the $ (f,s) $-th segment. Since the transmission time of one file segment is much larger than the channel coherent time of small-scale fading, it is assumed that the ergodic channel capacity span all possible small-scale channel fading can be achieved during one segment transmission. Let $ \rho_{f,n} $ and $ \rho_c $ be the pathloss from the BS to the $ (f,n) $-th user and the $ c $-th cache node respectively, $ \eta_{f,n,s} $ and $ \eta_{f,n,s}^c $ be the corresponding shadowing attenuation in $ n $-th transmission of the $ (f,s) $-th segment. In the downlink transmission of one file, the location of requesting user is assumed to be static. Let $ P_{f,n,s} $ be the downlink transmission power of the $ s $-th file segment in response to request of the $ (f,n) $-th user, $ N_{f,n,s} $ be the number of downlink transmission symbols scheduled to deliver the $ s $-th segment to the $ (f,n) $-th user, the throughput achieved by the $ (f,n) $-th downlink user in the transmission of the $ s $-th segment is given by
\begin{equation}\label{eqn:dl-rate}
R_{f,n,s} = N_{f,n,s} \mathbb{E}_{\mathbf{h}_{f,n,s}} \left[ \log_2 \left( 1 + \frac{||\mathbf{h}_{f,n,s}||^2 P_{f,n,s}}{N_T \sigma^2_z} \right) \right],
\end{equation}
where $ \sigma_z^2 $ is the power of noise as well as inter-cell interference, $ \mathbf{h}_{f,n,s} $ is the i.i.d. channel vector from the BS to the requesting user. Each element of $ \mathbf{h}_{f,n,s} $ is complex Gaussian distributed with zero mean and variance $ \rho_{f,n}\eta_{f,n,s} $. As a remark note that the transmission of one segment may consume a large number of frames, and the channel vector $ \mathbf{h}_{f,n,s} $ can be different from frame to frame. However, since we consider the ergodic channel capacity, the randomness in small-scale fading is averaged. Hence, the $ (f,n) $-th user can decode the $ s $-th segment only when $ R_{f,n,s}  \geq R_F/N_S. $ Similarly, the achievable data rate of the $ c $-th cache node is given by
\begin{equation}\label{eqn:dl-cache}
R_{f,n,s}^c = N_{f,n,s} \mathbb{E}_{\mathbf{h}_{f,n,s}^c} \left[ \log_2 \left( 1 + \frac{||\mathbf{h}_{f,n,s}^c||^2 P_{f,n,s}}{N_T \sigma^2_z} \right) \right],
\end{equation}
where $ \mathbf{h}_{f,n,s}^c $ is the i.i.d. channel vector from the BS to $ c $-th cache node. Each element of $ \mathbf{h}_{f,n,s}^c $ is complex Gaussian distributed with zero mean and variance $ \rho_{c}\eta_{f,n,s}^c $. The $ c $-th cache node can decode the $ s $-th segment only when $
R_{f,n,s}^c  \geq R_F/N_S. $

Moreover, let $ \eta^c_{k} $ be the shadowing attenuation in the $ k $-th proactive transmission opportunities of the BS to the $ c $-th cache node, $ P_k $ and $ N_k $ be the corresponding downlink transmission power and transmission symbols. The throughput achieved by the $ c $-th cache node is given by
\begin{equation}\label{eqn:proactive}
R_{k}^c = N_{k} \mathbb{E}_{\mathbf{h}_{k}^c} \left[ \log_2 \left( 1 + \frac{||\mathbf{h}_{k}^c||^2 P_{k}}{N_T \sigma^2_z} \right) \right],
\end{equation}
where $ \mathbf{h}_{k}^c $ is the i.i.d. channel vector from the BS to $ c $-th cache node. Each element of $ \mathbf{h}_{k}^c $ is complex Gaussian distributed with zero mean and variance $ \rho_{c}\eta_{k}^c $. The file segment can be decoded when $
R_{k}^c  \geq R_F/N_S. $ 

There is an scheduling problem in the downlink transmission: if more power or time resource is spent in downlink file transmission, more cache nodes are able to buffer the file (or some segments of the file), which may save the resource of the BS in the successive transmissions of the same file. Thus it is necessary to optimize the resource allocation during the whole lifetime of one file, instead of single transmission. This is a multi-stage optimization problem, which shall be addressed by the following two steps in this paper: 
\begin{itemize}
	\item Without proactive caching placement of the BS, we shall first formulate the downlink transmission as a random-stage MDP, and propose both optimal and low-complexity solutions, which is elaborated in Section \ref{sec:opt} and \ref{sec:approximation}.
	
	\item In Section \ref{sec:proactive}, proactive caching placement is considered to further suppress the transmission cost of BS.
\end{itemize}

\section{Optimal Control without Proactive Caching Placement} \label{sec:opt}

\subsection{Problem Formulation} 
Without proactive caching placement, the system state and scheduling policy are defined below. 

\begin{Definition}[System State]
	Before the transmission to the $ (f,n) $-th user, the system status is uniquely specified by $ S_{f,n} = \left[\mathcal{B}_{f,s}^c,\rho_{f,n} , \eta_{f,n,s}, \eta_{f,n,s}^c| \forall c=1,...,N_C; s=1,...,N_S  \right]$, where $ \mathcal{B}_{f,s}^c =1 $ means that the $ s $-th segment of the $ f $-th file has been successfully decoded by the $ c $-th cache node and $ \mathcal{B}_{f,s}^c =0 $ means otherwise.
\end{Definition}

\begin{Definition}[Downlink Multicast Policy]
	Suppose the $ s $-th segment is transmitted to the $ (f,n) $-th user via downlink. Given the system state $ S_{f,n} $, the scheduling policy $ \Omega_{f,n} $ ($ \forall f,n $) is a mapping from system state and the remaining lifetime $ T_{f,n} $ to the scheduling parameters $ P_{f,n,s} $ and $ N_{f,n,s} $ ($ \forall s $). Thus $$\Omega_{f,n}( S_{f,n}, T_{f,n})= \{(P_{f,n,s}, N_{f,n,s}) | \forall s\}. $$ Moreover, to guarantee that the requesting user can successfully decode the downlink data, the constraint $ R_{f,n,s} \geq R_F/N_S$ ($ \forall s $) should be satisfied.
\end{Definition}

Let $ \mathcal{C}_{f,n}^s = \cup_{\forall i, \mathcal{B}_{f,s}^i=1} \mathcal{C}_i $ be the area where the reqeusting users is able to receive the $ (f,s) $-th file segment from cache nodes, and $ \mathbf{l}_{f,n} $ be the location of the $ (f,n) $-th user. We use the following cost function to measure the weighted sum of energy and transmission time of the BS, which is spent on the $ (f,n) $-th user for the $ s $-th segment.
\begin{equation}\nonumber
g_{f,n,s} (P_{f,n,s}, N_{f,n,s}) = I(\mathbf{l}_{f,n} \notin \mathcal{C}_{f,n}^s) \times (w_e P_{f,n,s}N_{f,n,s} + w_t N_{f,n,s}),
\end{equation}
where $ w_e $ and $ w_t $ are the weights on transmission energy and transmission opportunities respectively, and $ I(\cdot) $ is the indicator function. Since transmission from cache nodes to users is made via other air interfaces, the transmission resource used by the cache nodes is not counted in this cost function. Hence the average cost spent on the overall lifetime of the $ f $-th file is given by
\begin{equation}\nonumber
\overline{g}_{f} \left( \{\Omega_{f,n}|\forall n\} \right) = \sum_N \mathbb{E}_{\eta,\rho} \left[ \frac{(\lambda_f T)^N}{N!} e^{-\lambda_f T} \sum_{n=1}^{N} \sum_{s=1}^{N_S} g_{f,n,s}\right],
\end{equation}
where the expectation is taken over all possible large-scale channel fading (including the shadowing effect $ \eta $ and requesting users' pathloss $ \rho $) in the system. The summation on $ N $ is due to the random number of requests as elaborated in (\ref{eqn:request}). As a result, the overall system cost function is given by
\begin{equation} \nonumber
\overline{G}(\{\Omega_{f,n}|\forall f,n\}) = \lim_{F \rightarrow +\infty} \frac{1}{F}\sum_{f=1}^{F} \overline{g}_{f} \left( \{\Omega_{f,n}|\forall n\} \right),
\end{equation}
and the system optimization problem can be written as

\begin{Problem}[Overall System Optimization] \label{prob:overall}
\begin{eqnarray}
&\min\limits_{\{\Omega_{f,n}|\forall f,n\}} &\overline{G}(\{\Omega_{f,n}|\forall f,n\}) \nonumber\\
&s.t.& R_{f,n,s}  \geq R_F/N_S,\  \forall f,n,s. \nonumber
\end{eqnarray}
\end{Problem}

Since there is no constraint on the cache nodes' buffer capacity and transmission resources, the above optimization problem can be further decoupled into the following sub-problems with respect to each file.
\begin{Problem}[Optimization on the $ f $-th File]\label{prob:main}
\begin{eqnarray}
&\min\limits_{\{\Omega_{f,n}|\forall n\}} &\overline{g}_f(\{\Omega_{f,n}|\forall n\}) \nonumber\\
&s.t.& R_{f,n,s}  \geq R_F/N_S, \ \forall n,s. \label{eqn:cont}
\end{eqnarray}
\end{Problem}

\subsection{Optimal Control Policy} \label{sec:optimal}

Note that Problem \ref{prob:main} is a dynamic programming problem with random number of stages, which cannot be solved by the standard approach as \cite{Bertsekas2000Dynamic}. We shall shown  in this section that the optimal solution of Problem \ref{prob:main} (i.e. policy iteration) can be obtained by applying value iteration on another MDP problem (Problem \ref{prob:fix} as follows)  with finite and fixed number of stages first, then solving a revised version of Bellman's equation. First of all, we introduce the following MDP problem with fixed number of stages.

\begin{Problem}[Optimization with Fixed Stage Number] \label{prob:fix}
	\begin{eqnarray}
	&\min\limits_{\{\Omega_{f,n}|\forall n\}} &\mathbb{E}_{\eta,\rho} [\sum_{n=1}^{N_R}\sum_{s=1}^{N_S} g_{f,n,s}]\nonumber\\
	&s.t.& R_{f,n,s}  \geq R_F/N_S, \ \forall n,s. \nonumber
	\end{eqnarray}
	where $ N_R $ is the fixed number of requests on the $ f $-th file.
\end{Problem}

As introduced in \cite{MDP1}, there is standard solution for the MDP problem with finite and fixed number of stages. The optimal solution of Problem \ref{prob:fix} can be deduced via the following Bellman's equation  
\begin{eqnarray}
V_{N_R-n+1}(S_{f,n})\!=\!\!\min_{\Omega_{f,n}(S_{f,n})}\! \!\bigg\{\! \sum_{s} g_{f,n,s}(\Omega_{f,n}) \!+\!\sum\limits_{S_{f,n+1}}{\!\!V_{N_R-n}(S_{f,n+1})Pr(S_{f,n+1}|S_{f,n},\Omega_{f,n})} \!\bigg\}, \forall S_{f,n},\label{eqn:bellman-fix}
\end{eqnarray}
where $ V_{N_R-n+1}(S_{f,n}) $ is usually named as value function of the $ n $-th stage, and $ S_{f,n+1} $ denotes the next state of the $ f $-th file given the current state $ S_{f,n} $. According to the definition of value function in finite-stage dynamic programming, $ V_{N_R-n+1}(S_{f,n}) $ is the average remaining cost of the $ f $-th file from the $ n $-th transmission to the $ N_R $-th transmission, given the system state of the $ n $-th stage $ S_{f,n} $.

Note that the large-scale fading is i.i.d. in each file transmission, the expectation on large-scale fading can be taken on both side of the above equation. Hence we have the following conclusion.

\begin{Lemma}[Bellman's Equation with Reduced Space]\label{Lemma:reduce_space}
The optimal control policy of Problem \ref{prob:fix} is the solution of the Bellman's equation with reduced state space as follows
\begin{eqnarray}\label{eqn:bellman-reduce}
\widetilde{V}_{N_R-n+1}(\widetilde{S}_{f,n})\!=\!\!\!\!\!\min_{\Omega_{f,n}(\widetilde{S}_{f,n})} \!\! \!\!\!\mathbb{E}_{\eta,\rho} \bigg\{\! \sum_{s}\! g_{f,n,s}(\Omega_{f,n}) \!+\!\!\!\sum\limits_{\widetilde{S}_{f,n+1}}{\!\!\widetilde{V}_{N_R-n}(\widetilde{S}_{f,n+1})Pr(\widetilde{S}_{f,n+1}|{S}_{f,n},\Omega_{f,n})} \! \bigg\},
\end{eqnarray}
where $\widetilde{S}_{f,n}= \{\mathcal{B}_{f,s}^c \in S_{f,n}|\forall c,s \}$, $\widetilde{V}_{N_R-n}(\widetilde{S}_{f,n})=\mathbb{E}_{\eta,\rho}[V_{N_R-n}(S_{f,n+1})] $, and $ \Omega_{f,n}(\widetilde{S}_{f,n}) = \{\Omega_{f,n}(S_{f,n})|\forall \rho_{f,n} , \eta_{f,n,s}, \eta_{f,n,s}^c \} $.
\end{Lemma}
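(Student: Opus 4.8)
The plan is to establish Lemma~\ref{Lemma:reduce_space} by taking the expectation over the large-scale fading variables $\eta$ and $\rho$ on both sides of the original Bellman's equation~\eqref{eqn:bellman-fix}, and showing that the resulting recursion depends on the full system state $S_{f,n}$ only through its buffer component $\widetilde{S}_{f,n}=\{\mathcal{B}_{f,s}^c|\forall c,s\}$. The crucial structural observation is that the buffer indicators $\mathcal{B}_{f,s}^c$ form a \emph{controlled Markov chain} whose transition law, after averaging out the fading, no longer depends on the realized fading of the current stage: because the shadowing attenuations and user pathloss are i.i.d. across transmissions (as assumed in Section~\ref{sec:model}), the next-stage fading $\eta_{f,n+1,s},\eta_{f,n+1,s}^c,\rho_{f,n+1}$ is statistically independent of the current-stage fading $S_{f,n}\setminus\widetilde{S}_{f,n}$. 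This independence is what permits the state space to collapse from the continuous full state $S_{f,n}$ down to the finite buffer state $\widetilde{S}_{f,n}$.

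First I would proceed by backward induction on the stage index $n$, starting from the terminal stage $n=N_R$. At the terminal stage the value function $V_1(S_{f,n})$ is defined by the single-stage minimization of $\sum_s g_{f,n,s}$, and taking $\mathbb{E}_{\eta,\rho}$ directly yields $\widetilde{V}_1(\widetilde{S}_{f,n})$; here I would argue that the optimal policy $\Omega_{f,n}(S_{f,n})$ can be chosen per-realization, so averaging the minimum over the fading equals the minimization of the averaged cost once the policy is allowed to depend on the realized fading — which is precisely why the policy in~\eqref{eqn:bellman-reduce} is written as the family $\Omega_{f,n}(\widetilde{S}_{f,n})=\{\Omega_{f,n}(S_{f,n})|\forall\rho_{f,n},\eta_{f,n,s},\eta_{f,n,s}^c\}$ ranging over all fading-dependent controls. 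For the inductive step, I would assume $\widetilde{V}_{N_R-n}(\widetilde{S}_{f,n+1})=\mathbb{E}_{\eta,\rho}[V_{N_R-n}(S_{f,n+1})]$ holds, substitute the conditional expectation into the right-hand side of~\eqref{eqn:bellman-fix}, and interchange the minimization with the expectation over the current fading.

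The two facts I would lean on to justify the interchange are: (i) the continuation term $\sum_{S_{f,n+1}}V_{N_R-n}(S_{f,n+1})\Pr(S_{f,n+1}|S_{f,n},\Omega_{f,n})$, after averaging over next-stage fading, becomes $\sum_{\widetilde{S}_{f,n+1}}\widetilde{V}_{N_R-n}(\widetilde{S}_{f,n+1})\Pr(\widetilde{S}_{f,n+1}|S_{f,n},\Omega_{f,n})$, because the only part of $S_{f,n+1}$ that feeds forward into future costs is the buffer state $\widetilde{S}_{f,n+1}$ and the fading refreshes independently; and (ii) the buffer transition probability $\Pr(\widetilde{S}_{f,n+1}|S_{f,n},\Omega_{f,n})$ depends on the current fading and control only through whether the decoding thresholds $R_{f,n,s}^c\geq R_F/N_S$ are met, so after the outer $\mathbb{E}_{\eta,\rho}$ is applied and the per-realization minimization is carried out, the entire recursion is a function of $\widetilde{S}_{f,n}$ alone.

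I expect the main obstacle to be rigorously justifying the interchange of the minimization operator with the expectation over the current-stage fading, i.e.\ arguing that $\mathbb{E}_{\eta,\rho}[\min_{\Omega}(\cdot)]=\min_{\{\Omega(\cdot)\}}\mathbb{E}_{\eta,\rho}[(\cdot)]$ when the inner policy is permitted to be fading-adaptive. This is a measurable-selection / interchangeability argument: the equality holds because the fading of the current stage is observable before the control $\Omega_{f,n}(S_{f,n})$ is chosen (the policy is a function of the full state $S_{f,n}$ including the realized fading), so choosing the optimal action separately for each realization is feasible and attains the pointwise minimum under the integral. I would make this precise by noting that the cost $\sum_s g_{f,n,s}$ and the transition kernel are jointly measurable in the fading and the control, and that an optimal (or $\epsilon$-optimal) measurable selector exists, so no loss is incurred by swapping the order. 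Once this is in place, matching terms on both sides closes the induction and establishes~\eqref{eqn:bellman-reduce}.
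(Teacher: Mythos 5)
Your proposal is correct and follows essentially the same route as the paper's Appendix~A proof: take $\mathbb{E}_{\eta,\rho}$ on both sides of the fixed-stage Bellman's equation, use the i.i.d.\ fading assumption to rewrite the continuation term as a sum over buffer states $\widetilde{S}_{f,n+1}$ weighted by $Pr(\widetilde{S}_{f,n+1}|S_{f,n},\Omega_{f,n})$, and interchange the minimization with the expectation because the policy is permitted to depend on the realized fading. Your explicit backward induction and the measurable-selection justification of the interchange $\mathbb{E}_{\eta,\rho}[\min(\cdot)]=\min\mathbb{E}_{\eta,\rho}[(\cdot)]$ merely spell out steps the paper performs implicitly in its two displayed equalities, so this is the same argument made more rigorous rather than a different one.
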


\begin{proof}
Please refer to Appendix A.
\end{proof}

The standard iterative value iteration can be used to solve the Bellman's equation (\ref{eqn:bellman-reduce}), and obtain the value functions $ \widetilde{V}_{N_R-n+1}(\widetilde{S}_{f,n}) $ ($ \forall n, \widetilde{S}_{f,n} $). Since this is not the contribution of this paper, the detail procedure of value iteration is omitted here due to limited space. In the following lemma, we show that the optimization of Problem \ref{prob:main} can also be solved given the value functions $ \widetilde{V}_{N_R-n+1}(\widetilde{S}_{f,n}) $ ($ \forall n, \widetilde{S}_{f,n} $). 

\begin{Lemma}[Optimal Control Policy of Problem \ref{prob:main}]\label{Lemma:random_stage} With the value function $ \widetilde{V}_{N_R-n+1}(\widetilde{S}_{f,n}) $ ($ \forall n $), the optimal control policy for Problem \ref{prob:main}, denoted as $ \Omega_{f,n}^*(\widetilde{S}_{f,n},T_{f,n}) $, can be calculated as
\begin{eqnarray}\label{eqn:bellman-random}
\Omega_{f,n}^*(\widetilde{S}_{f,n},T_{f,n})&=&\arg \min_{\Omega_{f,n}(\widetilde{S}_{f,n})} \mathbb{E}_{\eta,\rho} \bigg\{\sum_{s} g_{f,n,s}(\Omega_{f,n}) + \nonumber\\
&&\sum\limits_{N,\widetilde{S}_{f,n+1}}\frac{(\lambda_f T_{f,n})^N}{N!} e^{-\lambda_f T_{f,n}}{\widetilde{V}_{N}(\widetilde{S}_{f,n+1})Pr(\widetilde{S}_{f,n+1}|{S}_{f,n},\Omega_{f,n})} \bigg\},
\end{eqnarray}
where $ \widetilde{S}_{f,n+1} $ is the next state of the $ f $-th file given the current state $ S_{f,n} $.
\end{Lemma}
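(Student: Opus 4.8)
The plan is to establish (\ref{eqn:bellman-random}) by applying the principle of optimality at the $n$-th request and then evaluating the resulting expected cost-to-go through a conditioning argument on the random number of subsequent requests, combining the memorylessness of the Poisson process with Lemma \ref{Lemma:reduce_space}. First I would note a structural fact specific to Problem \ref{prob:main}: without proactive caching placement, the buffer indicators $\mathcal{B}_{f,s}^c$ can change only during a requested downlink multicast, so between two consecutive requests the system state does not move. Hence, once the $n$-th request has been served and the action $\Omega_{f,n}$ has driven the system from $S_{f,n}$ to a post-action state, that state is exactly the state $\widetilde{S}_{f,n+1}$ seen by the $(n+1)$-th request, and the continuation depends on the history only through $\widetilde{S}_{f,n+1}$ and the remaining lifetime $T_{f,n}$. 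Since the large-scale fading is i.i.d. across transmissions, the reduced state $\widetilde{S}_{f,n+1}$ together with the transition law $Pr(\widetilde{S}_{f,n+1}\mid S_{f,n},\Omega_{f,n})$ is a sufficient description of the one-step dynamics, exactly as in Lemma \ref{Lemma:reduce_space}.

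Next I would exploit the Poisson structure to identify the continuation cost. By the memoryless property, conditioned on the remaining lifetime $T_{f,n}$ the number $N$ of requests still arriving before expiry is Poisson with mean $\lambda_f T_{f,n}$, so $\Pr(N)=\frac{(\lambda_f T_{f,n})^N}{N!}e^{-\lambda_f T_{f,n}}$, independent of the arrival history. Conditioned on exactly $N$ future requests, the continuation started from $\widetilde{S}_{f,n+1}$ is a dynamic program with a fixed and known number $N$ of stages, i.e. precisely Problem \ref{prob:fix} with $N_R=N$, whose optimal expected cost is the reduced value function $\widetilde{V}_{N}(\widetilde{S}_{f,n+1})$ furnished by Lemma \ref{Lemma:reduce_space}, with the terminal convention $\widetilde{V}_0\equiv 0$ accounting for the event of no further requests. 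Averaging $\widetilde{V}_{N}(\widetilde{S}_{f,n+1})$ over $N$ with these Poisson weights and over $\widetilde{S}_{f,n+1}$ with the action-induced transition law gives the expected cost-to-go as the bracketed double sum in (\ref{eqn:bellman-random}); adding the immediate cost $\mathbb{E}_{\eta,\rho}[\sum_s g_{f,n,s}(\Omega_{f,n})]$ and minimizing over the feasible $\Omega_{f,n}$, subject to the decoding constraint $R_{f,n,s}\ge R_F/N_S$, then produces $\Omega_{f,n}^*(\widetilde{S}_{f,n},T_{f,n})$.

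The step I expect to be the main obstacle is justifying that the optimal continuation cost, conditioned on $N$ subsequent requests, equals the fixed-stage optimum $\widetilde{V}_N$, i.e. the interchange of the minimization over the continuation policy with the expectation over the random stage count $N$. This is delicate because a future-stage policy observes only its state and remaining lifetime, not $N$, whereas $\widetilde{V}_N$ implicitly uses the number of remaining stages; a naive interchange only yields an inequality, since the minimum of a Poisson average is in general at least the Poisson average of the stagewise minima. The equality asserted in (\ref{eqn:bellman-random}) must therefore be argued from problem-specific structure rather than taken for granted: the Poisson weights $\frac{(\lambda_f T_{f,n})^N}{N!}e^{-\lambda_f T_{f,n}}$ do not depend on $\Omega_{f,n}$, so the action enters only through the immediate cost and the one-step transition, and the cost $g_{f,n,s}$ is additive and stage-separable. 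I would attempt to close this by writing the random-stage cost-to-go through the inter-arrival recursion, in which after the action the next request arrives after an exponential delay $\tau$ and the remaining lifetime contracts to $T_{f,n}-\tau$, and then verifying by induction on the stage index that the Poisson-weighted combination of the $\widetilde{V}_N$ is consistent with this recursion; along the way I would also confirm the regularity conditions, namely nonemptiness of the feasible set under $R_{f,n,s}\ge R_F/N_S$ and finiteness of the value functions, so that the minimizer defining $\Omega_{f,n}^*$ indeed exists.
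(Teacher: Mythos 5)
Your first two paragraphs are, in substance, the paper's entire proof. The paper justifies (\ref{eqn:bellman-random}) by exactly the two observations you make: memorylessness of the Poisson process gives $\Pr(N)=\frac{(\lambda_f T_{f,n})^N}{N!}e^{-\lambda_f T_{f,n}}$ for the number of further requests within the remaining lifetime $T_{f,n}$, and $\widetilde{V}_{N}(\widetilde{S}_{f,n+1})$ is read as the average cost of serving those $N$ requests from the post-action state, after which the paper declares the conclusion ``straightforward.'' So on the level of what is actually written down, your proposal matches the paper's route, including the reduction to the fixed-stage value functions of Lemma \ref{Lemma:reduce_space} and the convention $\widetilde{V}_0\equiv 0$.

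The obstacle you isolate in your third paragraph, however, is genuine, and you should be aware that the paper does not resolve it either; it is silently assumed. Your inequality direction is right: since the number $N$ of future requests is not causally observable (the controller sees only $\widetilde{S}$ and $T_{f,n}$), conditioning on $N$ and inserting the $N$-stage optimum $\widetilde{V}_N$ computes the cost of a controller that is clairvoyant about $N$, so with $p_N(T)=\frac{(\lambda_f T)^N}{N!}e^{-\lambda_f T}$ the mixture $\sum_N p_N(T)\widetilde{V}_N(\widetilde{S})$ is in general only a lower bound on the optimal cost-to-go achievable by lifetime-adapted policies. Moreover, the repair you sketch will reproduce, not remove, this requirement: writing the causal cost-to-go as $W(\widetilde{S},T)=\int_0^T \lambda_f e^{-\lambda_f\tau}\,\mathbb{E}\big[\min_{a}\{g(a)+\mathbb{E}_{\widetilde{S}'}W(\widetilde{S}',T-\tau)\}\big]d\tau$ and using the identity $\int_0^T \lambda_f e^{-\lambda_f\tau}\,p_M(T-\tau)\,d\tau=p_{M+1}(T)$, one finds that the Poisson mixture solves this recursion precisely when the inner minimization commutes with the expectation over the stage count, i.e., when the minimizer of $g(a)+\mathbb{E}_{\widetilde{S}'}\widetilde{V}_M(\widetilde{S}')$ can be chosen independently of $M$. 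That property fails in general for this problem (with few expected future requests one should not invest in caching placement; with many one should), so your planned induction cannot close the gap without additional structure. What the conditioning argument rigorously establishes is that (\ref{eqn:bellman-random}) defines the one-step-lookahead policy with respect to the clairvoyant lower bound, which is how the lemma should be read; as a claim of exact optimality among lifetime-adapted policies, both your attempt and the paper's proof leave this interchange step open.
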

\begin{proof} 
Note that (1) $ \widetilde{V}_{N}(\widetilde{S}_{f,n+1}) $ denotes the averaged cost for $ N $ times of requests on the $ f $-th file, give the system state $ \widetilde{S}_{f,n+1} $; (2) $ \frac{(\lambda_f T_{f,n})^N}{N!} e^{-\lambda_f T_{f,n}} $ is the probability that there are $ N $ times of file requests within the duration $ T_{f,n} $. The conclusion is straightforward since the average cost of the BS, given the current state $ \widetilde{S}_{f,n} $ and remaining lifetime $ T_{f,n} $, is 
$$ \mathbb{E}_{\eta,\rho} \bigg\{\sum_{s} g_{f,n,s}(\Omega_{f,n}) + \sum\limits_{N,\widetilde{S}_{f,n+1}}\frac{(\lambda_f T_{f,n})^N}{N!} e^{-\lambda_f T_{f,n}}{\widetilde{V}_{N}(\widetilde{S}_{f,n+1})Pr(\widetilde{S}_{f,n+1}|{S}_{f,n},\Omega_{f,n})} \bigg\}. $$
\end{proof}

As a summary, the optimal control policy of the Problem \ref{prob:main} can be solved via the following two steps.
\begin{itemize}
	\item \textbf{Value Iteration}: Calculate the value function $ \widetilde{V}_{N_R-n+1}(\widetilde{S}_{f,n})( \forall n, \widetilde{S}_{f,n} ) $  via Bellman's equation with reduced space (\ref{eqn:bellman-reduce}).
	\item \textbf{Policy Iteration}: Obtain optimal control policy from (\ref{eqn:bellman-random}).
\end{itemize}
Note that the state space of the system is actually huge. For example, suppose there are 20 cache nodes in the system and 10 segments per file, the dimensional of system state is $2^{200}$, which is referred to as the {\em curse of dimensionality} in dynamic programming problem. Thus although the optimal solution algorithm can be provided, the computation complexity is intractable. Hence, we continue to propose a low-complexity solution based on the technique of approximated MDP and reinforcement learning in the following section.

\section{Low-Complexity Solution via Approximated MDP}\label{sec:approximation}

In this section, we shall introduce a novel linear approximation approach on the value function $ \widetilde{V}_{N_R-n+1}(\widetilde{S}_{f,n}) $, and elaborate the on-line control policy to determine the control actions given the current system state and approximated value function. In order to evaluate the approximated value function, we first provide analytical expressions given the distribution of requesting users (e.g. uniform distribution); then an online learning algorithm is introduced for unknown distribution statistics of requesting users.

\subsection{Approximation on Value Function} \label{sub:app}

We first define the notations for following reference system states.
\begin{itemize}
	\item $ \widetilde{S}_{f}^* $ is the state of $f$-th file where all the cache nodes have successfully decoded the whole file. Thus $
	\widetilde{S}_{f,n}^*= \{\mathcal{B}_{f,s}^{c} =1| \forall c, s\} .$
	
	\item $ \widetilde{S}_{f}^{i,s} $ is the state of $ f $-th file transmission where only the $ s $-th segment at the $ i$-th cache node is not successfully decoded. $
	\widetilde{S}_{f}^{i,s}= \{\mathcal{B}_{f,s}^{i} = 0, \mathcal{B}_{f,t}^{j}=1| \forall (j,t) \neq (i,s) \} .$
\end{itemize}
Hence, we approximate the value function $ \widetilde{V}_{N_R-n+1}(\widetilde{S}_{f,n}) $ linearly as 	\begin{equation}\label{eqn:app-v}
\widetilde{V}_{N_R-n+1}(\widetilde{S}_{f,n}) \approx \widetilde{V}_{N_R-n+1}(\widetilde{S}_{f}^*) + \sum_{\{(i,s)|\forall \mathcal{B}^i_{f,s}(\widetilde{S}_{f,n}) =0\}} \bigg( \widetilde{V}_{N_R-n+1}(\widetilde{S}_{f}^{i,s})-\widetilde{V}_{N_R-n+1}(\widetilde{S}_{f}^*)\bigg),
\end{equation}
where $\mathcal{B}^i_{f,s}(\widetilde{S}_{f,n})  $ means the parameter of $ \mathcal{B}^i_{f,s} $ in the system state $ \widetilde{S}_{f,n} $. An example of approximated value function is elaborated below.
\begin{figure}[tb]
	\centering
	\includegraphics[scale = 0.5]{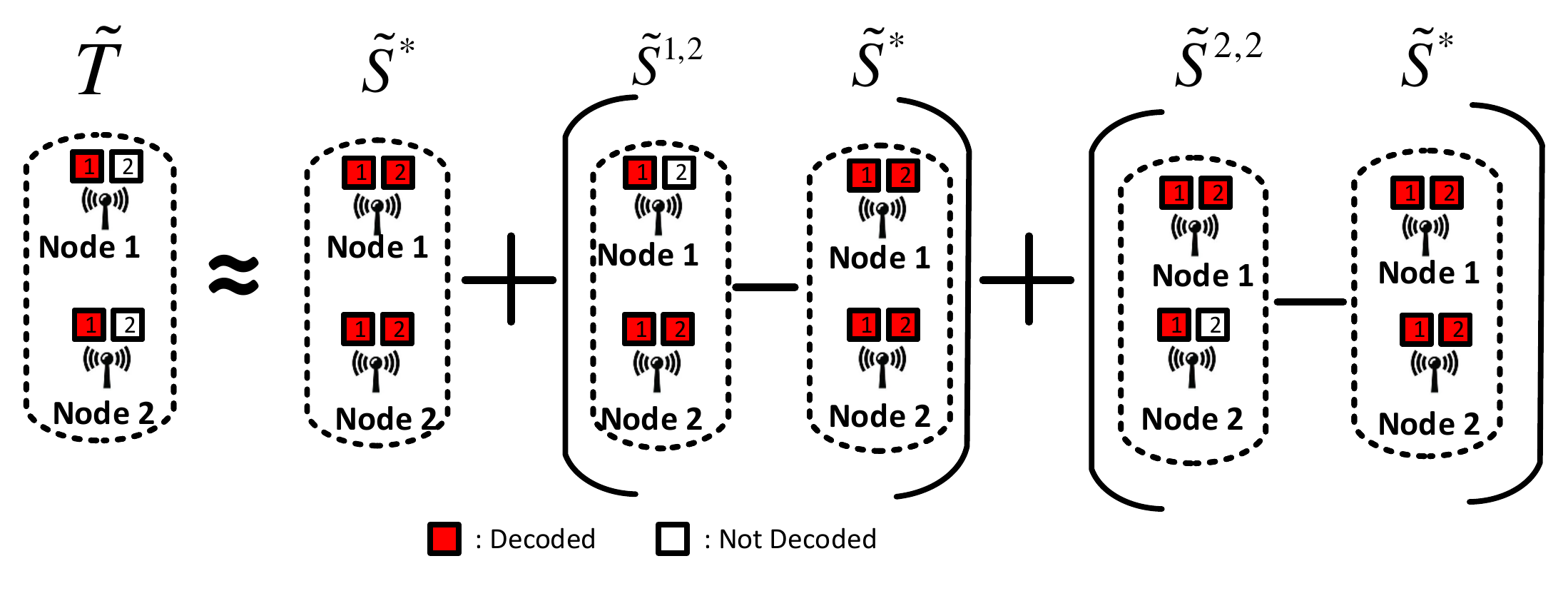}
	\caption{One example of approximated value function.}
	\label{fig:example2}
\end{figure}
\begin{Example}[Approximated Value Function]
	An example on the approximated value function is illustrated in Fig. \ref{fig:example2}, where there are two cache nodes and downlink file (say the $ f $-th file) is divided into two segments. For the system state $  \widetilde{T} = [\mathcal{B}^1_{f,1}, \mathcal{B}^1_{f,2}, \mathcal{B}^2_{f,1}, \mathcal{B}^2_{f,2}] = [1, 0, 1, 0] $, the value function on the $ n $-th stage can be approximated as
	\begin{eqnarray}
	\widetilde{V}_{N_R-n+1}(\widetilde{T}) &\approx& \widetilde{V}_{N_R-n+1}(\widetilde{S}_{f}^*) +\bigg( \widetilde{V}_{N_R-n+1}(\widetilde{S}_{f}^{1,2})-\widetilde{V}_{N_R-n+1}(\widetilde{S}_{f}^*)\bigg) \nonumber\\
	&&+\bigg( \widetilde{V}_{N_R-n+1}(\widetilde{S}_{f}^{2,2})-\widetilde{V}_{N_R-n+1}(\widetilde{S}_{f}^*)\bigg). \nonumber
	\end{eqnarray}
	In the right hand side of the above approximation, the first term counts the transmission cost for the users outside the coverage of cache nodes; the second term approximates the cost on the second segment transmission to the users within the coverage of the first cache node; and the third term approximates the cost on the second segment transmission to the users within the coverage of the second cache node. Note that there is no transmission cost on the first segment for the users within the coverage of both cache nodes.
\end{Example}

In order to apply this approximation on all value function, it is necessary to obtain the value of $\widetilde{V}_{N_R-n+1}(\widetilde{S}_{f}^*)$ and $\widetilde{V}_{N_R-n+1}(\widetilde{S}_{f}^{i,s})$ for all $ n,i $, and $ s $ via (\ref{eqn:bellman-reduce}). In the following, we provide the analytically expressions for them with the knowledge of the requesting users' distribution. Moreover, in the next section, an online learning algorithm is proposed for the evaluation of $\widetilde{V}_{N_R-n+1}(\widetilde{S}_{f}^*)$ and $\widetilde{V}_{N_R-n+1}(\widetilde{S}_{f}^{i,s})$ with unknown spatial distribution of requesting users.

\subsubsection{Evaluation of $ \widetilde{V}_{N_R-n+1}(\widetilde{S}_{f}^*) $}

Note that system state $ \widetilde{S}_{f}^* $ represents the situation that all the cache nodes have already decoded the $ f $-th file, the purpose of downlink transmission is only to make sure that the requesting users, which are outside of the coverage of cache nodes, can decode the downlink file. Hence it is clear that
\begin{eqnarray}\nonumber
\widetilde{V}_{N_R-n+1}(\widetilde{S}_{f}^*) &=& (N_R-n+1)\Pr(\mathbf{l}_{f,n} \notin \mathcal{C}_{f,n}^s) \nonumber\\
&&\mathbb{E}_{\rho,\eta} \bigg[\sum_s \min \limits_{P_{f,n,s} \atop N_{f,n,s}}  w_e P_{f,n,s}N_{f,n,s} + w_t N_{f,n,s}  \bigg| \mathbf{l}_{f,n} \notin \mathcal{C}_{f,n}^s\bigg] \nonumber \\
&s.t.& R_{f,n,s} \geq R_F/N_S, \ \forall s.\nonumber
\end{eqnarray}
The above value function can be calculated with analytical expression, which is elaborated below.
\begin{Lemma}\label{Lemma:Segment}
The value function $ \widetilde{V}_{N_R-n+1}(\widetilde{S}_{f}^*) $ is given by
\begin{eqnarray}\nonumber
\widetilde{V}_{N_R-n+1}(\widetilde{S}_{f}^*) 
\approx (N_R-n+1)\Pr(\mathbf{l}_{f,n} \notin \mathcal{C}_{f,n}^s)
\mathbb{E}_{\rho,\eta} \bigg[\sum_s (w_e P_{f,n,s}^*N_{r,f,s}^* + w_t N_{f,n,s}^* ) \bigg| \mathbf{l}_{f,n} \notin \mathcal{C}_{f,n}^s \bigg], \nonumber 
\end{eqnarray}
where $
P_{f,n,s}^* =\frac{w_t} {w_e \mathbb{W}(\frac{2^{\theta}w_t}{ew_e}) },
N_{f,n,s}^* =\frac {R_F}{N_S[\theta+\log_2(P_{f,n,s}^*)]} $, $
\theta =	\mathbb{E}_{\mathbf{h}_{f,n,s}} \left[ \log_2 \left(  \frac{||\mathbf{h}_{f,n,s}||^2}{N_T \sigma^2_z}\right) \right] $, 
and $\mathbb{W}(x)$ is the Lambert-W function \cite{W}.
\end{Lemma}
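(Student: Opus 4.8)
The plan is to reduce the lemma to a single deterministic power--time allocation problem per segment and then solve that problem in closed form. First I would exploit the fact that $\widetilde{S}_{f}^*$ is an absorbing state: once every cache node has decoded the whole file, no downlink transmission can alter the buffer indicators $\mathcal{B}_{f,s}^c$, so the state remains $\widetilde{S}_{f}^*$ for all subsequent requests. Consequently the recursion in (\ref{eqn:bellman-reduce}) degenerates into $\widetilde{V}_{k}(\widetilde{S}_{f}^*) = c + \widetilde{V}_{k-1}(\widetilde{S}_{f}^*)$, where $c$ is the common per-request cost, and the value function is simply $(N_R-n+1)$ copies of $c$, which accounts for the leading factor $(N_R-n+1)$. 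For a single request, the only users that require a BS transmission are those outside the cache coverage, i.e. with $\mathbf{l}_{f,n} \notin \mathcal{C}_{f,n}^s$; since in state $\widetilde{S}_{f}^*$ we have $\mathcal{C}_{f,n}^s = \mathcal{C}$ for every $s$, the indicator inside $g_{f,n,s}$ is common to all segments, and taking expectation over the user location factors $c$ into $\Pr(\mathbf{l}_{f,n}\notin \mathcal{C}_{f,n}^s)$ times the conditional expectation of the minimized per-segment cost. This isolates, for each realization of the large-scale fading, the scalar problem $\min_{P_{f,n,s},N_{f,n,s}} w_e P_{f,n,s}N_{f,n,s}+w_t N_{f,n,s}$ subject to $R_{f,n,s}\geq R_F/N_S$.

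Next I would handle the rate constraint. Invoking the high-SNR approximation $\log_2(1+x)\approx \log_2 x$ (this is the source of the ``$\approx$'' in the statement), the ergodic rate in (\ref{eqn:dl-rate}) separates as $R_{f,n,s}\approx N_{f,n,s}\big(\theta + \log_2 P_{f,n,s}\big)$, with $\theta = \mathbb{E}_{\mathbf{h}_{f,n,s}}[\log_2(\|\mathbf{h}_{f,n,s}\|^2/(N_T\sigma_z^2))]$, because the power $P_{f,n,s}$ factors out of the logarithm and the expectation acts only on the fading term. Since the objective $N_{f,n,s}(w_e P_{f,n,s}+w_t)$ is strictly increasing in $N_{f,n,s}$ for fixed $P_{f,n,s}$, the constraint is active at the optimum, which gives $N_{f,n,s}^* = \frac{R_F}{N_S(\theta+\log_2 P_{f,n,s})}$ --- exactly the claimed expression --- and reduces the problem to the single variable $P_{f,n,s}$.

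The remaining step is to minimize $\phi(P)=\frac{w_e P + w_t}{\theta+\log_2 P}$ over $P>2^{-\theta}$, the feasibility range that keeps the rate positive. I would set $\phi'(P)=0$, clear the denominator, and convert the resulting transcendental stationarity condition into the canonical Lambert-W form $x e^x = y$; reading off the root yields $P_{f,n,s}^* = \frac{w_t}{w_e \mathbb{W}(2^\theta w_t/(e w_e))}$. Because $\phi(P)\to+\infty$ as $P\downarrow 2^{-\theta}$ and as $P\to\infty$, the unique interior stationary point is the global minimizer, and one verifies $\theta+\log_2 P^* = (\mathbb{W}(2^\theta w_t/(e w_e))+1)/\ln 2 > 0$, so the solution is feasible. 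Substituting $P^*$ and $N^*$ back into the per-segment cost and summing over $s$ reproduces the stated value function.

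The main obstacle I anticipate is the algebra of this last step: massaging $\phi'(P)=0$ into precisely the $xe^x=y$ form that $\mathbb{W}(\cdot)$ inverts, while tracking the base-$2$ versus natural logarithm carefully so that the factors $2^\theta$ and $e$ land correctly inside the Lambert-W argument. The absorbing-state argument and the expectation-factoring are routine by comparison.
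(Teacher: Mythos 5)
Your proposal is correct and follows essentially the same route as the paper's Appendix B: the high-SNR approximation $R_{f,n,s}\approx N_{f,n,s}(\theta+\log_2 P_{f,n,s})$, the active rate constraint yielding $N_{f,n,s}^*=\frac{R_F}{N_S(\theta+\log_2 P_{f,n,s})}$, and a first-order condition in the single variable $P_{f,n,s}$ whose stationarity equation is inverted by the Lambert-W function. You additionally supply details the paper leaves implicit --- the absorbing-state argument behind the $(N_R-n+1)\Pr(\mathbf{l}_{f,n}\notin\mathcal{C}_{f,n}^s)$ prefactor, the explicit $xe^x$ reduction giving $P_{f,n,s}^*=\frac{w_t}{w_e\mathbb{W}(2^{\theta}w_t/(ew_e))}$, and the feasibility and global-optimality checks --- all of which are sound.
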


\begin{proof}
	Please refer to Appendix B.	
\end{proof}
\subsubsection{Evaluation of $\widetilde{V}_{N_R-n+1}(\widetilde{S}_{f}^{i,s})$}

Given system state $ \widetilde{S}_{f}^{i,s} $ for arbitrary stage, there are only two possible next system states $  \widetilde{S}_{f}^{i,s} $ and $  \widetilde{S}_{f}^{*} $, which are discussed below.
\begin{itemize}
	\item When $ \rho_{f,n}  \eta_{f,n,s} \leq \rho_i \eta_{f,n,s}^i $, thus $ R_{f,n,s} \leq R_{f,n,s}^i$, the $ i $-th cache node is alway able to decode the $ s $-th file segment give that the transmission constraint (\ref{eqn:cont}) should be satisfied. Thus the next state must be $  \widetilde{S}_{f}^{*} $. In this case, the optimized RHS of (\ref{eqn:bellman-reduce}) is given by
	\begin{eqnarray} \label{eqn:qc}
	Q_c (\widetilde{S}_f^{i,s})&=& \min\limits_{\Omega_{f,n}(\widetilde{S}_f^{i,s})}  \sum_{t} g_{f,n,t}(\widetilde{S}_f^{i,s},\Omega_{f,n}) +{\widetilde{V}_{N_R-n}(\widetilde{S}_{f}^{*})}, \\
	&s.t.& R_{f,n,t} = \frac{R_F}{N_S},\ \  \forall t. \nonumber
	\end{eqnarray}
	
	\item When $ \rho_{f,n}  \eta_{f,n,s} > \rho_i \eta_{f,n,s}^i $, thus $ R_{f,n,s} > R_{f,n,s}^i$, the BS can choose to secure the transmission of the $ s $-th segment to the $ (f,n) $-th user or the $ i $-th cache node. Hence the optimized  RHS of (\ref{eqn:bellman-reduce}) is given by $$
	\mathbb{E}  \bigg\{ \min \bigg[ Q_u(\widetilde{S}_f^{i,s}),Q_{i,s}(\widetilde{S}_f^{i,s}) \bigg] \bigg\},$$ 
	where $Q_u$ and $Q_{i,s}$ are defined below.
	\begin{eqnarray}\label{eqn:a_u}
	Q_u(\widetilde{S}_f^{i,s}) & = & \min\limits_{\Omega_f(\widetilde{S}_f^{i,s})} \sum_{t} g_{f,n,t}(\widetilde{S}_f^{i,s},\Omega_{f,n}) +{\widetilde{V}_{N_R-n}(\widetilde{S}_{f}^{i,s})},\ \ s.t. \ \ R_{f,n,t}=\frac{R_F}{N_S}, \  \forall t
	\end{eqnarray}
	\begin{eqnarray}\label{eqn:a_i}
	Q_{i,s}(\widetilde{S}_f^{i,s})&=&\min\limits_{\Omega_f(\widetilde{S}_f^{i,s})} \sum_{t} g_{f,n,t}(\widetilde{S}_f^{i,s},\Omega_f) + {\widetilde{V}_{N_R-n}(\widetilde{S}_{f}^{*})}\\
	&s.t. & R_{f,n,s}^i \!=\!\frac{ R_F}{N_S} \mbox{ and } R_{f,n,t}\!=\!\frac {R_F}{N_S}, \!  \forall t \neq s. \nonumber
	\end{eqnarray}
\end{itemize}
As a result, the expression of $\widetilde{V}_{N_R-n+1}(\widetilde{S}_{f}^{i,s})$ is summarized by the following lemma.
\begin{Lemma}\label{Lemma:default_one_Segment}
	 The value function  $\widetilde{V}_{N_R-n+1}(\widetilde{S}_{f}^{i,s})$ is given below
	 \begin{eqnarray}\label{eqn:v_default one fragment }
	 \widetilde{V}_{N_R-n+1}(\widetilde{S}_{f}^{i,s})&=&\mathbb{E}_{\eta,\rho}[Q_c(\widetilde{S}_f^{i,s})|R_{f,n,s} \leq R_{f,n,s}^i]Pr(R_{f,n,s} \leq R_{f,n,s}^i) + \nonumber\\
	 &&\mathbb{E}_{\eta,\rho}[\min \{Q_u(\widetilde{S}_f^{i,s}), Q_{i,s}(\widetilde{S}_f^{i,s})\}|R_{f,n,s} > R_{f,n,s}^i]Pr(R_{f,n,s}> R_{f,n,s}^i).
	 \end{eqnarray}
	 Moreover, the optimal control actions for $ Q_c $ and $ Q_u $ are the same as Lemma \ref{Lemma:Segment}. The optimal control action for  $ Q_{i,s} $ is given by
	 \begin{equation*}
	 P_{f,n,s}^*=\frac{w_t} {w_e \mathbb{W}(\frac{2^{\theta^i}w_t}{ew_e}) },N_{f,n,s}^*=\frac {R_F}{N_S[\theta^i+\log_2(P_{f,n,s})]}, \theta^i=	\mathbb{E}_{\mathbf{h}_{f,n,s}} \left[ \log_2 \left(  \frac{||\mathbf{h}_{f,n,s}^i||^2}{N_T \sigma^2_z}\right) \right]
	 \end{equation*}
	 and $ \forall t\neq s $
	 \begin{equation*}
	 P_{f,n,t}^* =\frac{w_t} {w_e \mathbb{W}(\frac{2^{\theta}w_t}{ew_e}) },
	 N_{f,n,t}^* =\frac {R_F}{N_S[\theta+\log_2(P_{f,n,s}^*)]}, 
	 \theta =	\mathbb{E}_{\mathbf{h}_{f,n,t}} \left[ \log_2 \left(  \frac{||\mathbf{h}_{f,n,t}||^2}{N_T \sigma^2_z}\right) \right].
	 \end{equation*}
\end{Lemma}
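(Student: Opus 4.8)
The plan is to evaluate the value function directly from the reduced Bellman equation (\ref{eqn:bellman-reduce}) specialized to the state $\widetilde{S}_f^{i,s}$, exploiting the very restricted transition structure of this state. First I would observe that in $\widetilde{S}_f^{i,s}$ every buffer indicator equals one except $\mathcal{B}_{f,s}^i=0$; since buffers are only filled (never erased) during a file's lifetime and there is no proactive placement, the single bit $\mathcal{B}_{f,s}^i$ is the only one that can flip. Consequently the only reachable next states are $\widetilde{S}_f^{i,s}$ (the flip does not occur) and $\widetilde{S}_f^*$ (the flip occurs), which collapses the inner sum over $\widetilde{S}_{f,n+1}$ in (\ref{eqn:bellman-reduce}) to two terms. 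Because the downlink policy $\Omega_{f,n}(\widetilde{S}_{f,n})$ is permitted to depend on the full realization of $\rho$ and $\eta$, I would then interchange the outer $\min$ and $\mathbb{E}_{\eta,\rho}$, reducing the problem to a per-realization minimization whose value is afterwards averaged.

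Next I would perform the case split that drives the statement. For each channel realization the transition is governed by whether cache node $i$ decodes the $s$-th segment, i.e. whether $R_{f,n,s}^i\ge R_F/N_S$. Since rate is monotone in transmit power, when $\rho_{f,n}\eta_{f,n,s}\le\rho_i\eta_{f,n,s}^i$ (so $R_{f,n,s}\le R_{f,n,s}^i$ for common $P,N$) the feasibility constraint (\ref{eqn:cont}), binding at $R_{f,n,s}=R_F/N_S$ at the optimum, automatically yields $R_{f,n,s}^i\ge R_F/N_S$; hence the transition to $\widetilde{S}_f^*$ is deterministic and the cost-to-go is exactly $Q_c(\widetilde{S}_f^{i,s})$ of (\ref{eqn:qc}). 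In the complementary event $R_{f,n,s}>R_{f,n,s}^i$ the BS faces a genuine decision: either pay only enough to serve the user, keeping the state at $\widetilde{S}_f^{i,s}$ with cost $Q_u$ of (\ref{eqn:a_u}), or over-provision segment $s$ so that the weaker cache-node link also decodes, advancing to $\widetilde{S}_f^*$ with cost $Q_{i,s}$ of (\ref{eqn:a_i}); optimality selects the smaller, $\min\{Q_u,Q_{i,s}\}$. Assembling the two events by the law of total expectation then gives (\ref{eqn:v_default one fragment }).

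It remains to produce the closed-form actions. Each of $Q_c$, $Q_u$, $Q_{i,s}$ is a sum of per-segment costs $w_eP_{f,n,t}N_{f,n,t}+w_tN_{f,n,t}$ subject to a single active rate constraint per segment, so the minimization decouples across $t$ and each scalar subproblem is identical in form to the one already solved in Lemma \ref{Lemma:Segment}. I would therefore reuse that derivation: applying the ergodic/high-SNR rate expression $R=N(\theta+\log_2 P)$, eliminating $N$ through the binding constraint and setting the derivative to zero produces the Lambert-W solution. For every segment constrained by the user's rate this yields the $(P^*,N^*,\theta)$ triple of Lemma \ref{Lemma:Segment}; for the segment $s$ inside $Q_{i,s}$, where the binding constraint is the cache-node rate $R_{f,n,s}^i=R_F/N_S$, the same algebra applies with $\theta$ replaced by $\theta^i$, giving the stated $\theta^i$-expressions while the segments $t\neq s$ retain $\theta$.

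The step I expect to be the main obstacle is the deterministic-transition claim in the case $R_{f,n,s}\le R_{f,n,s}^i$, and more generally making the case split depend only on the channel ordering rather than on the user's location. Strictly, when the requesting user already lies in the coverage $\cup_{j\neq i}\mathcal{C}_j=\mathcal{C}_{f,n}^s$ of some other cache node, the BS transmits nothing for segment $s$ (the indicator in $g_{f,n,s}$ is zero) and the bit $\mathcal{B}_{f,s}^i$ cannot flip regardless of the channels. I would handle this by noting that such covered-user realizations contribute zero immediate cost and leave the state unchanged, so they are subsumed in the state-preserving branch; the indicator inside each $g_{f,n,t}$ and the probability weights in (\ref{eqn:v_default one fragment }) absorb them, leaving the channel ordering $\rho_{f,n}\eta_{f,n,s}$ versus $\rho_i\eta_{f,n,s}^i$ as the only quantity that distinguishes the forced-transition branch from the genuine-decision branch. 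Verifying that this bookkeeping is consistent with the conditional expectations in (\ref{eqn:v_default one fragment }) is the delicate part; the remaining convex-optimization steps are routine given Lemma \ref{Lemma:Segment}.
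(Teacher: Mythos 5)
Your proposal follows essentially the same route as the paper: the paper's (omitted) proof rests on exactly the ingredients you use, namely the observation that from $\widetilde{S}_f^{i,s}$ only the two next states $\widetilde{S}_f^{i,s}$ and $\widetilde{S}_f^*$ are reachable, the case split on the channel ordering $\rho_{f,n}\eta_{f,n,s}$ versus $\rho_i\eta_{f,n,s}^i$ yielding $Q_c$ in (\ref{eqn:qc}) on one branch and $\min\{Q_u,Q_{i,s}\}$ from (\ref{eqn:a_u})--(\ref{eqn:a_i}) on the other, assembly by total expectation, and the Lambert-W closed forms obtained by repeating the high-SNR derivation of Lemma \ref{Lemma:Segment} with $\theta$ replaced by $\theta^i$ for the cache-constrained segment. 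Your closing caveat about covered-user realizations (where the coverage indicator zeroes the cost and the bit $\mathcal{B}^i_{f,s}$ cannot flip) is a genuine subtlety that the paper itself silently glosses over rather than resolves, so flagging it is a point of extra care, not a divergence from the paper's argument.
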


\begin{proof}
The proof is similar to that of Lemma \ref{Lemma:Segment}, and it is omitted here.	
\end{proof}

With the distribution knowledge of large-scale fading, the value functions $\widetilde{V}_{N_R-n+1}(\widetilde{S}_{f}^*)$ and $\widetilde{V}_{N_R-n+1}(\widetilde{S}_{f}^{i,s})$ can be calculated according to above analytical expressions. Note that these value functions are the same for different files, they can be applied on all the files' transmission once calculated. Thus the complexity is low.

\subsection{Low-Complexity Downlink Multicast Policy } \label{sub:low}

With the value function $\widetilde{V}_{N_R-n+1}(\widetilde{S}_{f}^*)$ and $\widetilde{V}_{N_R-n+1}(\widetilde{S}_{f}^{i,s})$, the value function for arbitrary system state in arbitrary transmission stage can be approximated via (\ref{eqn:app-v}). Hence the online control action for arbitrary system state $ S_{f,n} $, denoted as $ \Omega_{f,n}^{*}({S}_{f,n}) $, can be obtained by 
the following optimization problem.

\begin{Problem}[Online Optimization]\label{prob:online}
\begin{eqnarray}
\Omega_{f,n}^{*}({S}_{f,n})&=\arg\min& \sum_s  g_{f,n,s}({S}_{f,n},\Omega_{f,n})  +\sum\limits_{N}\frac{(\lambda_f T_{f,n})^N}{N!} e^{-\lambda_f T_{f,n}}{\widetilde{V}_{N}(\widetilde{S}_{f,n+1})}\nonumber\\
&s.t.& R_{f,n,s} \geq R_F / N_S, \ \forall s. \nonumber
\end{eqnarray}

\end{Problem}

Since the value function $ \widetilde{V}_{N}(\widetilde{S}_{f,n+1}) $ is approximated by (\ref{eqn:app-v}), the optimization in Problem \ref{prob:online} can be further decoupled for each segment. For the $ s $-th segment ($ \forall s $), the solution of Problem \ref{prob:online} can be obtained by the following problem, given that the requesting user cannot find the segment from nearby cache nodes.
\begin{Problem}[Online Optimization for the $ s $-th Segment]\label{prob:online-s}
	\begin{eqnarray}
	&&\{P_{f,n,s}^{*}, N_{f,n,s}^{*}\}\nonumber\\
	&=\arg\min& g_{f,n,s}({S}_{f,n},\Omega_{f,n})  + \sum\limits_{N}\frac{(\lambda_f T_{f,n})^N}{N!} e^{-\lambda_f T_{f,n}} \sum\limits_{\{i|\forall \mathcal{B}^i_{f,s}(\widetilde{S}_{f,n+1})=0\}} \widetilde{V}_{N}(\widetilde{S}_{f}^{i,s}) - \widetilde{V}_{N}(\widetilde{S}_{f}^{*})\nonumber\\
	&s.t.& R_{f,n,s} \geq R_F / N_S, \nonumber
	\end{eqnarray}
where $ \mathcal{B}^i_{f,s}(\widetilde{S}_{f,n+1}) $ represents the buffer status for the $ (f,s) $-th segment in the $ i $-th cache node.
\end{Problem}

Due to the second term of objective in Problem \ref{prob:online-s}, the BS should first choose the cache nodes for downlink receiving, in addition to the requesting user. Based on the selection, the optimal power and transmission time can be derived. This is an integrated continuous and discrete optimization, its solution algorithm is summarized below.

\begin{Algorithm}[Online Scheduling with Approximated Value Function]\label{alg:AMDP}
	
Given the system state $ S_{f,n} $, let $ d_1,d_2,.. $ be the indexes of cache nodes, whose large-scale attenuation to the BS in the $ s $-th segment is worse than the $ (f,n) $-th user. Moreover, without loss of generality, it is assumed that $ \rho_{d_1}\eta_{f,n,s}^{d_1} \leq  \rho_{d_2}\eta_{f,n,s}^{d_2} \leq ... \leq \rho_{f,n,s}\eta_{f,n,s}$. The optimal control action for the $ s $-th segment ($ \forall s $) can be obtained below.
\begin{itemize}
	\item For each $ i $, suppose the $ d_i $-th cache node are involved for downlink receiving, the optimal power and transmission time control is given by
	\begin{align}
	Q_{d_i,s}^{*}({S}_{f,n})&=\min\limits_{P_{f,n,s} \atop N_{f,n,s}} g_{f,n,s}({S}_{f,n},\Omega_{f,n})  + \sum\limits_{N}\frac{(\lambda_f T_{f,n})^N}{N!} e^{-\lambda_f T_{f,n}} \bigg[\sum\limits_{j=\{d_1,...,d_{i-1}\}} \!\!\!\!\!\!\widetilde{V}_{N}(\widetilde{S}_{f}^{j,s}) - \widetilde{V}_{N}(\widetilde{S}_{f}^{*})\bigg]\nonumber\\
	&s.t. \quad R_{f,n,s}^{d_i} = R_F/N_S. \nonumber
	\end{align}
	The optimal solution, denoted as $[P_{f,n,s}^{d_i},N_{f,n,s}^{d_i}]$, can be derived similar to Lemma \ref{Lemma:Segment}. 
	\item Let $
d^{*}=\arg \min\limits_{d_i} Q_{d_i,s}^* $, the solution of Problem \ref{prob:online-s} is then given by $
[P_{f,n,s}^{*},N_{f,n,s}^{*}]=[P_{f,n,s}^{d^*},N_{f,n,s}^{d^*}]. $
	
\end{itemize}

\end{Algorithm}

\subsection{Learning Algorithm for Approximated Value Function}\label{sec:learning}

In Section \ref{sub:app}, the values of $\widetilde{V}_{N_R-n+1}(\widetilde{S}_{f}^*)$ and $\widetilde{V}_{N_R-n+1}(\widetilde{S}_{f}^{i,s})$ are evaluated analytically by assuming the available distribution knowledge of requesting users. However in practice, the statistics of the distribution may be unknown to the BS. In order to address this issue, we introduce in the following a learning-based online algorithm to evaluate the value functions  $\widetilde{V}_{N_R-n+1}(\widetilde{S}_{f}^*)$ and $\widetilde{V}_{N_R-n+1}(\widetilde{S}_{f}^{i,s})$.

\begin{Algorithm}[Reinforcement Learning for Value Function]\label{alg:learning}
	Suppose requesting users for all files are distributed i.i.d. according to certain spatial distribution with unknown statistics. When one user becomes active to request one file, its location and channel information can be used to update all the value functions. Note that $\widetilde{V}_{N_R-n+1}(\widetilde{S}_{f}^*)$ and $\widetilde{V}_{N_R-n+1}(\widetilde{S}_{f}^{i,s})$ ($ \forall n,i,s $) are identical for all files, we only elaborate the update on the $ f $-th file's value functions. Specifically, the learning procedure is described below.
	\begin{itemize}
		\item {\bf Step 1}: Let $ t=0 $. Initialize the value of $\widetilde{V}_{N_R-n+1}(\widetilde{S}_{f}^*)$ and $\widetilde{V}_{N_R-n+1}(\widetilde{S}_{f}^{i,s})$ ($ \forall n,i,s $), and denote them as $\widetilde{V}^t_{N_R-n+1}(\widetilde{S}_{f}^*)$ and $\widetilde{V}^t_{N_R-n+1}(\widetilde{S}_{f}^{i,s})$. This initialization can be done by assuming all the users appear uniformly in the cell coverage, hence the approach in Section \ref{sub:app} can be applied.
		
		\item {\bf Step 2}: $ t=t+1 $.
		
		\item {\bf Step 3}: When one new user appears to request any downlink file, update the value functions with its location and channel information. Suppose it is the $ m $-th request on the $ g $-th file, its location is defined as $ \mathbf{l}_{m,g} $, we have
		\begin{eqnarray}
		\widetilde{V}_{N_R-n+1}^t(\widetilde{S}_{f}^*)&=&\frac{t}{t+1}\widetilde{V}_{N_R-n+1}^{t-1}(\widetilde{S}_{f}^*)+\nonumber\\
		&&\frac{1}{t+1}(N_R-n+1)I(\mathbf{l}_{g,m} \notin \mathcal{C})
		\sum_s (w_e P_{g,m,s}^*N_{g,m,s}^* + w_t N_{g,m,s}^*),\nonumber
		\end{eqnarray}
		where $P_{g,m,s}^* =\frac{w_t} {w_e \mathbb{W}(\frac{2^{\theta_t}w_t}{ew_e}) },
		N_{g,m,s}^* =\frac {R_F}{N_S[\theta_t+\log_2(P_{f,n,s}^*)]} $, $
		\theta_t =	\mathbb{E}_{\mathbf{h}_{g,m,s}} \left[ \log_2 \left(  \frac{||\mathbf{h}_{g,m,s}||^2}{N_T \sigma^2_z}\right) \right] $.
		\begin{eqnarray}
		\widetilde{V}_{N_R-n+1}^t(\widetilde{S}_{f}^{i,s})&=&\frac{t}{t+1}\widetilde{V}_{N_R-n+1}^{t-1}(\widetilde{S}_{f}^{i,s})+\frac{1}{t+1}\bigg\{ Q_c(\widetilde{S}_{g}^{i,s}) I(R_{g,m,s} \leq R_{g,m,s}^i)\nonumber\\
		&& + \min \{Q_u (\widetilde{S}_{g}^{i,s}), Q_{i,s} (\widetilde{S}_{g}^{i,s})\} I(R_{g,m,s}> R_{g,m,s}^i)\bigg\}, \nonumber
		\end{eqnarray}
		where $ Q_c $, $Q_u$ and $Q_{i,s}$ are defined in (\ref{eqn:qc}), 
		(\ref{eqn:a_u}) and (\ref{eqn:a_i}) respectively.
		
		\item Step 4: If $\max\{|\widetilde{V}_{N_R-n+1}^t(\widetilde{S}_{f}^{i,s})- \widetilde{V}_{N_R-n+1}^{t-1}(\widetilde{S}_{f}^{i,s})|, |\widetilde{V}_{N_R-n+1}^t(\widetilde{S}_{f}^{*})- \widetilde{V}_{N_R-n+1}^{t-1}(\widetilde{S}_{f}^{*})| \big| \forall n,i,s\}  $ is greater than one threshold $ \tau $, the algorithm goes to Step 2; otherwise, the algorithm terminates.
	\end{itemize}
\end{Algorithm}

Moreover, we have the following lemma on the convergence of above iterative learning algorithm.

\begin{Lemma}\label{Lemma:learning}
	The Algorithm \ref{alg:learning} will converge to the true value of $\widetilde{V}_{N_R-n+1}(\widetilde{S}_{f}^*)$ and $\widetilde{V}_{N_R-n+1}(\widetilde{S}_{f}^{i,s})$ ($ \forall f,n,i,s $). Thus
	\begin{equation}
	\lim\limits_{t \rightarrow +\infty } \widetilde{V}^t_{N_R-n+1}(\widetilde{S}_{f}^*) = \widetilde{V}_{N_R-n+1}(\widetilde{S}_{f}^*) \nonumber,
	\end{equation}
	\begin{equation}
	\lim\limits_{t \rightarrow +\infty } \widetilde{V}^t_{N_R-n+1}(\widetilde{S}_{f}^{i,s}) = \widetilde{V}_{N_R-n+1}(\widetilde{S}_{f}^{i,s})\nonumber.
	\end{equation}
\end{Lemma}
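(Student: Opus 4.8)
The plan is to recognize the two update rules in Algorithm \ref{alg:learning} as sample-average (Robbins--Monro with step size $1/(t+1)$) recursions and to invoke the strong law of large numbers (SLLN). First I would unroll the convex-combination recursion: writing the update of $\widetilde{V}^t_{N_R-n+1}(\widetilde{S}_f^*)$ as $V^t = \tfrac{t}{t+1}V^{t-1}+\tfrac{1}{t+1}X_t$, a straightforward induction gives $V^t = \tfrac{1}{t+1}\big(V^0+\sum_{k=1}^{t}X_k\big)$, i.e.\ the running sample mean of the per-request contributions $X_k$, plus a vanishing initial term $V^0/(t+1)\to 0$. The model assumption that requesting users are i.i.d.\ in location and that the associated large-scale fading is i.i.d.\ makes the $X_k$ i.i.d.\ across update instances, so the core of the argument reduces to identifying $\mathbb{E}[X]$ with the target value function and checking integrability.

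Second, for $\widetilde{V}_{N_R-n+1}(\widetilde{S}_f^*)$ the per-request term is $X_k=(N_R-n+1)\,I(\mathbf{l}_{g,m}\notin\mathcal{C})\sum_s(w_eP^*_{g,m,s}N^*_{g,m,s}+w_tN^*_{g,m,s})$, which contains no previously learned quantity. Using $\mathbb{E}[I(\mathbf{l}\notin\mathcal{C})Y]=\Pr(\mathbf{l}\notin\mathcal{C})\,\mathbb{E}[Y\mid \mathbf{l}\notin\mathcal{C}]$ together with the closed-form optimizers of Lemma \ref{Lemma:Segment}, its expectation is exactly the expression stated there. Hence this family of value functions converges almost surely by the SLLN with no coupling to handle; these are the easy cases.

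Third, for $\widetilde{V}_{N_R-n+1}(\widetilde{S}_f^{i,s})$ the per-request term is $Q_c\,I(R_{g,m,s}\le R^i_{g,m,s})+\min\{Q_u,Q_{i,s}\}\,I(R_{g,m,s}>R^i_{g,m,s})$, whose conditional expectation reproduces the two-branch expression of Lemma \ref{Lemma:default_one_Segment}. The complication is that $Q_c,Q_u,Q_{i,s}$ in \eqref{eqn:qc}, \eqref{eqn:a_u}, \eqref{eqn:a_i} depend on the next-stage value functions $\widetilde{V}_{N_R-n}(\widetilde{S}_f^*)$ and $\widetilde{V}_{N_R-n}(\widetilde{S}_f^{i,s})$, which are themselves being learned. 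I would resolve this by induction on the number of remaining stages $N_R-n+1$: the single-remaining-stage case carries no such dependence (the $Q$-terms collapse to immediate cost against the zero boundary value function), so the base case follows from the SLLN exactly as above; assuming all $(N_R-n)$-stage value functions have converged to their true values, the $Q$-terms at stage $n$ become continuous functions of converging inputs, the per-request samples converge to i.i.d.\ draws with the correct mean, and a further application of the SLLN yields convergence of the $(N_R-n+1)$-stage functions.

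The hard part will be making this third step rigorous: one must show that replacing the true next-stage value functions by their running estimates inside $Q_c,Q_u,Q_{i,s}$ does not shift the limit. Here I would either appeal to a two-timescale stochastic-approximation stability argument, or exploit continuity of the $\min\{\cdot\}$ map together with the inductive convergence of the inner estimates to bound the induced perturbation and send it to zero. A secondary technical point is integrability: because the per-request cost is a closed-form (Lambert-$\mathbb{W}$) function of the pathloss, which is bounded below on the finite cell coverage $\mathcal{C}$, the samples have finite mean, legitimizing every SLLN invocation.
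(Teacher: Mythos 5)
Your proposal matches the paper's own proof (Appendix D) in its core mechanism: the paper likewise unrolls the $1/(t+1)$ convex-combination update into a running sample mean, writing $\widetilde{V}^t_{N_R-n+1}(\widetilde{S}_{f}^*)=\widetilde{V}_{N_R-n+1}(\widetilde{S}_{f}^*)-\sum_{i=0}^{t}\varepsilon_i/(t+1)$ with i.i.d.\ zero-mean, finite-variance errors $\varepsilon_i$, and concludes from $Var\big[\sum_{i=0}^{t}\varepsilon_i/(t+1)\big]=Var[\varepsilon_i]/(t+1)\rightarrow 0$ --- a mean-square (weak-law) argument, whereas you invoke the SLLN and obtain almost-sure convergence; either mode substantiates the limits claimed in the lemma, and your identification of $\mathbb{E}[X]$ with the closed-form expression of Lemma \ref{Lemma:Segment} is the same computation. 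The genuine difference lies in the $\widetilde{S}_{f}^{i,s}$ family. The paper proves only the $\widetilde{S}_{f}^*$ case and asserts the other ``can be applied similarly,'' silently treating the samples $Q_c\, I(R_{g,m,s}\le R^i_{g,m,s})+\min\{Q_u,Q_{i,s}\}\, I(R_{g,m,s}>R^i_{g,m,s})$ as i.i.d.\ with the correct mean; as you observe, this holds only if the exact next-stage values $\widetilde{V}_{N_R-n}(\widetilde{S}_{f}^*)$ and $\widetilde{V}_{N_R-n}(\widetilde{S}_{f}^{i,s})$ appear inside (\ref{eqn:qc})--(\ref{eqn:a_i}), while any implementable version must plug in the running estimates, biasing the samples until the inner estimates converge. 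Your induction on the number of remaining stages closes exactly the step the paper elides: the single-stage base case carries no coupling, and in the inductive step the estimated values enter the $Q$'s additively while $\min\{\cdot,\cdot\}$ is $1$-Lipschitz, so each sample's perturbation is bounded by the current next-stage estimation error, which vanishes by hypothesis and is then annihilated by Ces\`{a}ro averaging --- this elementary perturbation route suffices, and the heavier two-timescale stochastic-approximation machinery you mention as an alternative is not needed. In short: same skeleton as the paper for the uncoupled family, but your treatment of the coupled family is strictly more careful than the paper's one-line dismissal, and your integrability remark (costs are bounded closed-form functions of large-scale fading over the finite cell) is a legitimate point the paper also leaves implicit.
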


\begin{proof}
	Please refer to Appendix C. 
\end{proof}

\subsection{Bounds on Approximated Value function}

Since the control policy is derived according to approximated value function in Section \ref{sub:low}, it is necessary to know the bound on the approximation error. Hence we introduce the following conclusion.

\begin{Lemma}[Bounds of Value Functions] \label{lem:bound} With arbitrary spatial distribution of requesting users, the upper-bound in (\ref{eqn:upper}) holds for value function $ \widetilde{V}_{N_R-n+1}(\widetilde{S}_{f,n}) $ ($ \forall n $). Moreover, if there is no overlap in the service region of cache nodes, the lower-bound in (\ref{eqn:lower}) also holds for  $ \widetilde{V}_{N_R-n+1}(\widetilde{S}_{f,n}) $.

\begin{equation}\label{eqn:upper}
\widetilde{V}_{N_R-n+1}(\widetilde{S}_{f,n}) \leq  \widetilde{V}_{N_R-n+1}(\widetilde{S}_{f,n}^*)+ \sum_{\{(i,s)|\forall \mathbf{B}^i_{f,s}(\widetilde{S}_{f,n}) = 0\}}  \bigg( \widetilde{V}_{N_R-n+1}(\widetilde{S}_{f}^{i,s})-\widetilde{V}_{N_R-n+1}(\widetilde{S}_{f,n}^*)\bigg)
\end{equation}
\begin{equation}\label{eqn:lower}
\widetilde{V}_{N_R-n+1}(\widetilde{S}_{f,n}) \geq  \widetilde{V}_{N_R-n+1}(\widetilde{S}_{f,n}^*)+ \sum_{{\{(i,s)|\forall \mathbf{B}^i_{f,s}(\widetilde{S}_{f,n}) = 0\}}} \bigg( \widetilde{V}_{1}(\widetilde{S}_{f}^{i,s})-\widetilde{V}_{1}(\widetilde{S}_{f,n}^*)\bigg)
\end{equation}	
\end{Lemma}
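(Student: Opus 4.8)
The plan is to reduce both inequalities to a single segment and then treat the two directions separately. Because the per-stage cost $g_{f,n,s}$, the decoding constraint $R_{f,n,s}\ge R_F/N_S$, and the buffer transition for segment $s$ all depend only on the segment-$s$ control and the segment-$s$ buffer vector $(\mathcal{B}^c_{f,s})_c$, an induction on the stage index using (\ref{eqn:bellman-reduce}) shows that $\widetilde V_{N_R-n+1}(\widetilde S_{f,n})=\sum_s \widetilde V^{(s)}_{N_R-n+1}$ for per-segment value functions $\widetilde V^{(s)}$. Under this decomposition the reference excesses $\widetilde V_{N_R-n+1}(\widetilde S_f^{i,s})-\widetilde V_{N_R-n+1}(\widetilde S_f^{*})$ collapse to purely single-segment quantities, so it suffices to prove each bound for one segment with buffer state $\mathbf b=(\mathcal B^c_{f,s})_c$ and missing set $M=\{i:\mathcal B^i_{f,s}=0\}$. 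I would first record two structural facts: (i) $\widetilde S_f^{*}$ is absorbing (once every node holds the segment the buffer never changes), so $\widetilde V_{N_R-n+1}(\widetilde S_f^{*})=(N_R-n+1)\,\widetilde V_1(\widetilde S_f^{*})$, consistent with Lemma \ref{Lemma:Segment}; and (ii) monotonicity, $\widetilde V_{N_R-n+1}(\widetilde S_{f,n})\ge \widetilde V_{N_R-n+1}(\widetilde S_f^{*})$, since in every stage the BS must at least serve the users outside $\mathcal C$, whose expected cost equals the per-stage cost at $\widetilde S_f^{*}$.

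For the lower bound (\ref{eqn:lower}) I would peel off the first request using (\ref{eqn:bellman-reduce}). For any admissible action the immediate cost is bounded below by the single-stage optimum $\widetilde V_1(\widetilde S_{f,n})$ — serving the current user can never cost less than the last-stage problem, in which caching yields no future benefit — while the cost-to-go from the successor state is bounded below by $(N_R-n)\,\widetilde V_1(\widetilde S_f^{*})$ via facts (i)--(ii). Adding the two gives $\widetilde V_{N_R-n+1}(\widetilde S_{f,n})\ge \widetilde V_{N_R-n+1}(\widetilde S_f^{*})+\big[\widetilde V_1(\widetilde S_{f,n})-\widetilde V_1(\widetilde S_f^{*})\big]$. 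It then remains to expand the one-stage excess: with no overlap the set of users who are uncovered at $\mathbf b$ but covered at $\widetilde S_f^{*}$ is the disjoint union $\bigcup_{i\in M}\mathcal C_i$, so this excess splits additively as $\sum_{(i,s)}\big[\widetilde V_1(\widetilde S_f^{i,s})-\widetilde V_1(\widetilde S_f^{*})\big]$, which is exactly (\ref{eqn:lower}).

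For the upper bound (\ref{eqn:upper}) the natural route is subadditivity of the value function in the set of decoded (node, segment) pairs: writing $W(M)$ for the per-segment value as a function of the missing set, the claim is $W(M)-W(\emptyset)\le\sum_{i\in M}\big(W(\{i\})-W(\emptyset)\big)$, i.e. the defects contribute no more jointly than separately. The plan is to derive this from submodularity of $W$ and then sum it over the singleton defects; submodularity I would try to propagate by induction on the number of remaining stages, showing that the Bellman operator in (\ref{eqn:bellman-reduce}) — including its inner minimization over which cache nodes the BS forces to decode — maps submodular value functions to submodular ones. When the coverage regions are disjoint the induction is immediate, since the per-node subsystems (the users in $\mathcal C_i$ together with the buffer bit $\mathcal B^i_{f,s}$) evolve independently; then $W$ is in fact additive and (\ref{eqn:upper}) holds with equality.

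The main obstacle is precisely this induction step once the coverage regions overlap. The per-stage minimization couples the defects through the shared multicast power and, more importantly, through the shared service region: a user in $\mathcal C_i\cap\mathcal C_j$ is served only once in the joint state but is charged once in each single-defect term, so the elementary ``superposition of single-defect policies'' coupling over-counts in the overlap and does not by itself deliver the inequality. Controlling this cross-defect interaction — equivalently, verifying that the Bellman update preserves submodularity when service regions intersect — is the delicate part, and the disjoint-coverage assumption is exactly what removes the interaction and makes the bound clean (indeed tight). I therefore expect the crux to be either invoking that geometric condition or carrying out a careful accounting of the overlapping regions.
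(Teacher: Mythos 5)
Your lower-bound argument is, in substance, the paper's own proof. The paper (Appendix E) likewise peels off the current stage under the optimal policy $\Omega_{f,n}^*$, bounds the immediate expected cost below by the corresponding last-stage cost, i.e. $\mathbb{E}_{\eta,\rho}\{\sum_s g_{f,n,s}(\widetilde{S}_{f,n},\Omega_{f,n}^*)\} \geq \mathbb{E}_{\eta,\rho}\{\sum_s g_{f,N_R,s}(\widetilde{S}_{f,n},\Omega_{f,N_R}^*)\}$, uses $\mathbb{E}_{\eta,\rho}\{\sum_{\widetilde{S}_{f,n+1}} \widetilde{V}_{N_R-n}(\widetilde{S}_{f,n+1})\Pr(\cdot)\} - \widetilde{V}_{N_R-n}(\widetilde{S}_{f}^*) \geq 0$ for the cost-to-go (your monotonicity plus the absorbing property of $\widetilde{S}_f^*$), and then invokes the no-overlap assumption to split the one-stage excess into $\sum_{(i,s)}\big(\widetilde{V}_{1}(\widetilde{S}_{f}^{i,s})-\widetilde{V}_{1}(\widetilde{S}_{f}^*)\big)$. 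Your per-segment decomposition and facts (i)--(ii) are sound and consistent with what the paper leaves implicit, so this half of your proposal stands.

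The genuine gap is the upper bound (\ref{eqn:upper}), which you reduce to submodularity of the value function in the set of missing $(i,s)$ pairs and propose to propagate through the Bellman operator (\ref{eqn:bellman-reduce}) --- but you then concede you cannot carry out the induction step when coverage regions overlap, and complete the argument only for disjoint coverage, where the bound degenerates to an equality. The lemma, however, asserts (\ref{eqn:upper}) without the no-overlap assumption, so your proposal proves strictly less than the statement. The paper never touches operator-level submodularity: it inducts on the \emph{number of defective cache nodes}, defining intermediate states $\widetilde{T}_f^{c,s}$ in which exactly the first $c$ nodes miss segment $s$, and for the step $l \to l+1$ it explicitly constructs a feasible suboptimal policy by superposition --- apply the optimal policy of $\widetilde{T}_f^{l,s}$ when the requesting user lands in $\mathcal{C}_1\cup\cdots\cup\mathcal{C}_l$, the optimal policy of the single-defect state $\widetilde{S}_f^{l+1,s}$ when the user lands in $\mathcal{C}_{l+1}$, and the costlier of the two when the user is outside all coverage. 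Its cost $\breve{V}_{N_R-n+1}(\widetilde{T}_f^{l+1,s})$ dominates $\widetilde{V}_{N_R-n+1}(\widetilde{T}_f^{l+1,s})$ and is bounded by $\widetilde{V}_{N_R-n+1}(\widetilde{T}_f^{l,s}) + \widetilde{V}_{N_R-n+1}(\widetilde{S}_f^{l+1,s}) - \widetilde{V}_{N_R-n+1}(\widetilde{S}_f^*)$, which yields subadditivity one defect at a time --- precisely the device your sketch is missing. To be fair, your worry about users in $\mathcal{C}_i\cap\mathcal{C}_j$ does probe a real soft spot: the paper's coupling tacitly assumes that a user covered under the comparison state remains covered (hence the borrowed policy remains feasible) in the true state, which is automatic only without problematic overlaps; but as measured against the paper's proof, what your attempt lacks is this explicit one-defect-at-a-time policy-superposition induction rather than a submodularity-preservation lemma.
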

\begin{proof}
	Please refer to Appendix D.
\end{proof}

Notice that the proposed linear approximation on value function is actually the upper-bound in (\ref{eqn:upper}), the gap between the approximated value function and the actual value function $ \widetilde{V}_{N_R-n+1}(\widetilde{S}_{f,n}) $, denoted as $ \mathcal{E}_{N_R-n+1}(\widetilde{S}_{f,n}) $, is given by
\begin{eqnarray} 
\mathcal{E}_{N_R-n+1}(\widetilde{S}_{f,n}) \leq  \sum_{\{(i,s)|\forall \mathcal{B}^i_{f,s}(\widetilde{S}_{f,n}) = 0\}} 
\bigg\{\widetilde{V}_{N_R-n+1}(\widetilde{S}_{f}^{i,s})-\widetilde{V}_{N_R-n+1}(\widetilde{S}_{f}^*)	-\widetilde{V}_{1}(\widetilde{S}_{f}^{i,s})+\widetilde{V}_{1}(\widetilde{S}_{f}^*)\bigg\}.\nonumber
\end{eqnarray}
According to the definition of value function, the average system cost on the $ f $-th file with optimal control can be written as
\begin{equation}
\overline{g}_f^{*} = \sum_{N_R}\frac{(\lambda_f T)^{N_R}}{N_R!} e^{-\lambda_f T} \widetilde{V}_{N_R} (\widetilde{S}_f^0), \label{eqn:system}
\end{equation}
where $ \widetilde{S}_f^0 $ denotes the system state with empty buffer in all cache nodes. Hence the upper and lower bounds on $ \overline{g}_f^{*} $ can be obtained by applying (\ref{eqn:upper}) and (\ref{eqn:lower}) on $ \widetilde{V}_{N_R} (\widetilde{S}_f^0) $ in (\ref{eqn:system}).

\section{Proactive Caching Placement}\label{sec:proactive}

In the above sections, the downlink transmission happens only when users raise requests on file transmission. This cannot exploit the channel temporal diversity on the shadowing effect. For example, when the channels from the BS to some cache nodes are good, the downlink resource consumption on caching placement can be low, and the proactive file transmission to these cache nodes may further save the overall downlink cost. Hence, in this section, we consider the scenario that the BS can deliver one file segment to cache nodes per every $ T_p $ seconds. It is assumed that the transmission of one file segment can be completed within the coherent time of shadowing attenuation. Thus we define the following control policy.

 \begin{Definition}[Proactive Caching Placement Policy]
	In the $ k $-th proactive transmission opportunity, given the buffer status of each cache node $ \{ \mathcal{B}_{f,s}^c|\forall c, f, s \} $, the shadowing from the BS to each cache nodes $ \{\eta^c_k | \forall c \} $, and the remaining lifetime of each file $ \{ T^k_f| \forall f \} $, the BS should determine the selected file segment $ (F_k, S_k) $ and the downlink transmission parameters $ P_{k} $ and $ N_{k} $ for the selected $ (F_k, S_k) $-th file segment. Thus denote $ S^k =\left[\{ \mathcal{B}_{f,s}^c|\forall c, f, s \},\{\eta^c_k | \forall c \}, \{ T^k_f| \forall f \} \right]$, the placement policy can be written as $$\Omega_{k}( S^k)= [F_k, S_k, P_{k}, N_{k}]. $$
\end{Definition}

Suppose before the $ k $-th proactive transmission opportunity, the buffer status of cache nodes is $ \widetilde{S}_{f}^k = \{ \mathcal{B}^c_{f,s} |\forall c,f,s \} $. The average cost of the BS for the $ f $-th file without proactive caching placement is given by
\begin{eqnarray}
\overline{g}_f^k(\widetilde{S}_{f}^k, T_{f}^k) = \sum\limits_{N}\frac{(\lambda_f T_{f}^k)^N}{N!} e^{-\lambda_f T_{f}^k}{\widetilde{V}_{N}(\widetilde{S}_{f}^k)} \nonumber,
\end{eqnarray}
where $ \widetilde{V}_{N}(\widetilde{S}_{f}^k) $ can be approximated linearly via (\ref{eqn:app-v}). In each proactive caching placement opportunity, the BS can choose one file segment which is expected to mostly suppress the expected cost of the BS. As a result, we have the following algorithm on the proactive caching placement.

\begin{Algorithm}[Proactive Caching Placement] On each proactive transmission opportunity (say the $ k $-th opportunity), the algorithm to determine the proactive caching placement policy $ \Omega_k $ is elaborated below.
	
\begin{itemize}
\item Step 1: For each file segment (say the $ (f,s) $-th one), evaluate the expected cost saving if the BS choose deliver this segment proactively by solving the following Problem \ref{prob:proactive}.

\item Step 2: The $ (F_k,S_k) $-th segment is chosen when the following two conditions are satisfied: 
\begin{itemize}
	\item $ (F_k,S_k)  = \arg\max\limits_{(f,s)} \Delta g_{f,s}^k $, where $ \Delta g_{f,s}^k $ is defined in Problem \ref{prob:proactive}.
	\item $ \Delta g_{F_k,S_k}^k \geq \tau^{'} $, where $ \tau^{'} > 1 $ is certain threshold.
\end{itemize}
\end{itemize}
\end{Algorithm}

\begin{Problem}[Proactive placement for $ (f,s) $-th Segment] \label{prob:proactive}	
\begin{eqnarray}
	\Delta g_{f,s}^k &=& \max\limits_{P_{f,s}^k, N_{f,s}^k, \mathcal{A}_f^k}	\overline{g}_f^k(\widetilde{S}_{f}^k, T_{f}^k) /  \left[w_e P_{f,s}^k N_{f,s}^k + w_t  N_{f,s}^k + \overline{g}_f^k(\breve{S}_{f}^k(\mathcal{A}_f^k), T_{f}^k)\right] \nonumber\\
	&s.t.& R_k^c \geq R_F/N_S, \ \forall c \in \mathcal{A}_f^k, \nonumber
\end{eqnarray}
where $ P_{f,s}^k, N_{f,s}^k$ and $ \mathcal{A}_f^k $ represents the transmission power, transmission time and the set of multicast cache nodes, $ \breve{S}_{f}^k(\mathcal{A}_f^k) $ denotes the cache buffer status where the cache nodes in $ \mathcal{A}_f^k $ have successfully decoded the $ (f,s) $-th segment given the previous status $ \widetilde{S}_{f}^k $.
\end{Problem}

\begin{Algorithm}[Solution Algorithm for Problem \ref{prob:proactive}] Given the system state $ \widetilde{S}_{f}^k $ and the remaining lifetime $T_f^k$, let $ d_1,d_2,... $($\forall d_i,\mathcal{B}^{d_i}_{f,s}(\widetilde{S}_{f}^{k} )=0 $) be the indexes of cache nodes satisfying $ \rho_{d_1}\eta_{k}^{d_1} \leq  \rho_{d_2}\eta_{k}^{d_2} \leq \cdots $. Then $\Delta g_{f,s}^k$ can be obtained below.
   \begin{itemize}
   	\item Step 1: For each $i$, suppose the $d_i$-th cache node are involved for downlink receiving, the optimal
   	power and transmission time control is given by
   		\begin{equation}
   		Q_{d_i,s}^{*} =\min\limits_{P_{f,s}^{k,d_i} \atop N_{f,s}^{k,d_i}}  w_e P_{f,s}^{k,d_i} N_{f,s}^{k,d_i} + w_t  N_{f,s}^{k,d_i}  + \sum\limits_{N}\frac{(\lambda_f T_{f}^k)^N}{N!} e^{-\lambda_f T_{f}^k} \bigg[\sum\limits_{j=\{d_1,...,d_{i-1}\}} \widetilde{V}_{N}(\widetilde{S}_{f}^{j,s}) - \widetilde{V}_{N}(\widetilde{S}_{f}^{*})\bigg],\nonumber
   		\end{equation}
   		subject to $ R_{k}^{d_i} = R_F/N_S $. The optimal solution can be derived similar to Lemma \ref{Lemma:Segment}. 
   	\item  Step 2: Let $
   		d_{i^*}=\arg \min\limits_{d_i} Q_{d_i,s}^* $, the solution of Problem \ref{prob:proactive} is given by $ P_{f,s}^k =  P_{f,s}^{k,d_{i^*}}$, $ N_{f,s}^k = N_{f,s}^{k,d_{i^*}} $, and $  \mathcal{A}_f^k = \{ d_i|\forall i \geq i^*\} $.
   	 \end{itemize}
\end{Algorithm}
	
\section{Simulation}\label{sec:sim}

\begin{figure} 
	\centering 
	\subfigure[$ N_C = 20 $]{ \label{fig:20caches}  \includegraphics[width=3in]{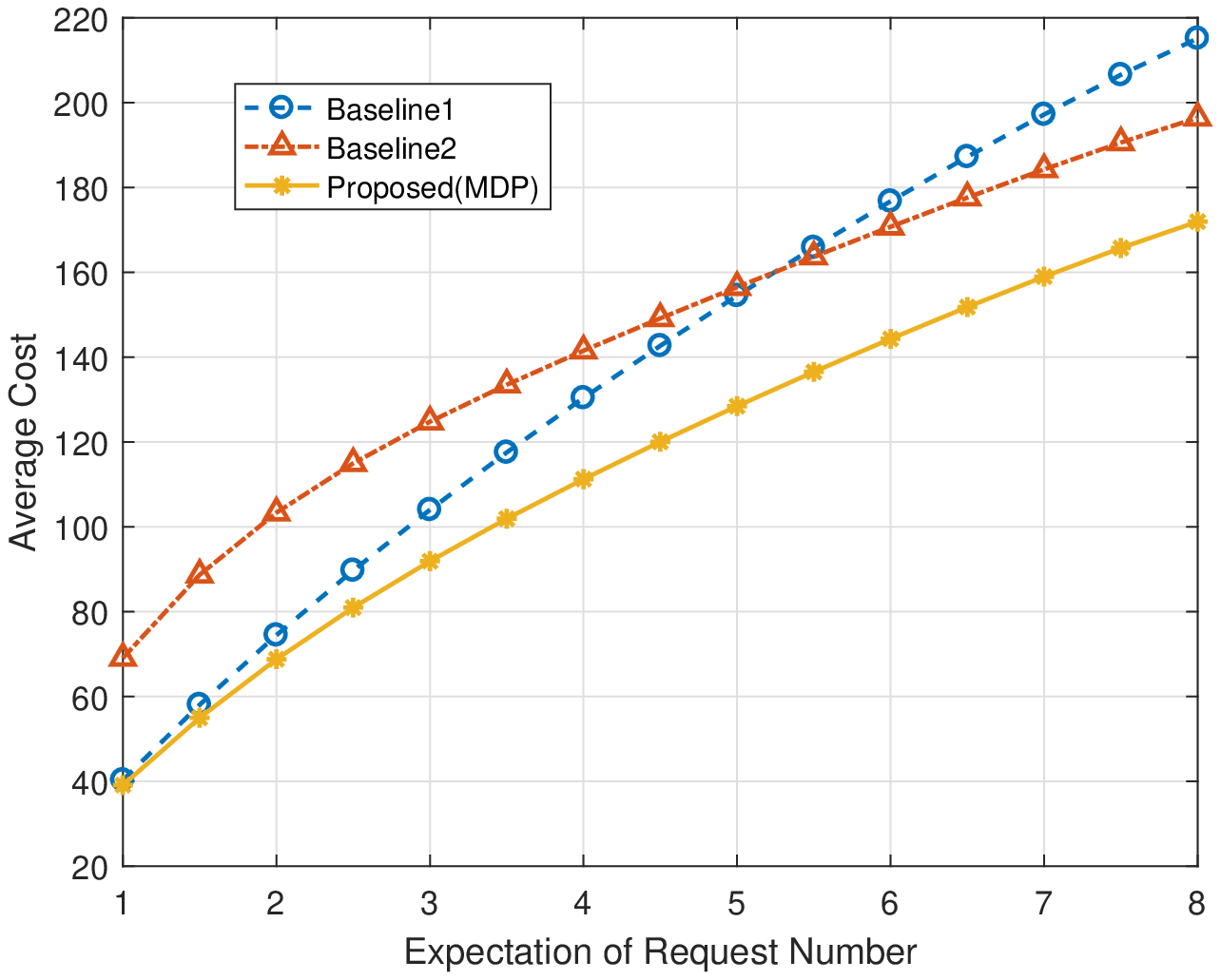}} 
	\hspace{0.2in} 
	\subfigure[$ N_C = 25 $]{ \label{fig:25caches} 
	\includegraphics[width=3in]{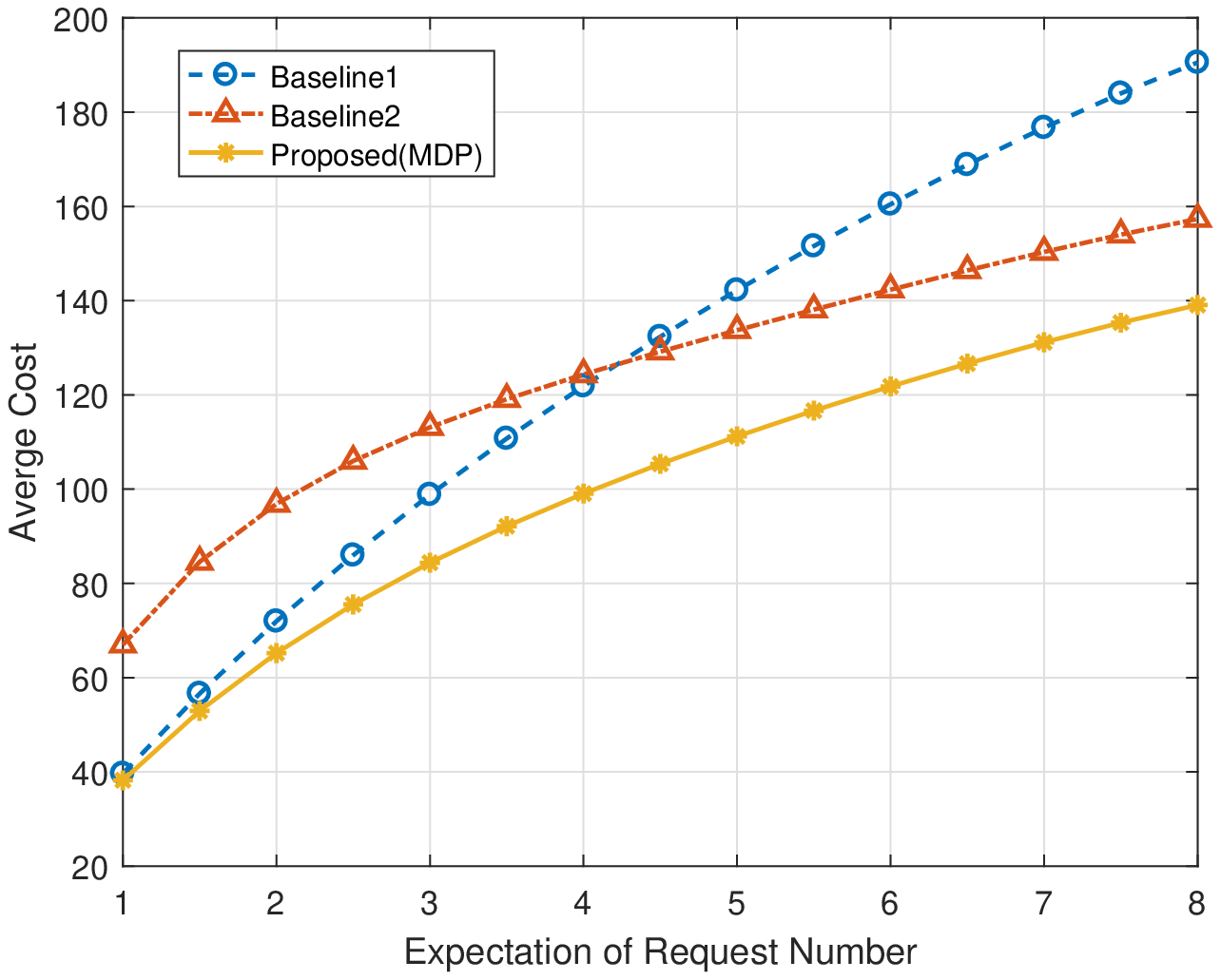}} 
	\caption{The average total cost versus the average times of request in file lifetime, where the number of cache nodes is $ 20 $ and $  25 $ respectively.} 
	\label{fig:caches} 
\end{figure}

In the simulation, the cell radiusis $ 500 $ meters, cache nodes are randomly deployed on the cell-edge region with a service radius of $ 90 $ meters. The number of antennas at the BS is 8. The downlink path loss exponent is $3.5$. The standard derivation of shadow fading in 6dB. The file size is $ 140 $Mb, and is further divided into $ 10 $ segments. The transmission bandwidth is $ 20 $MHz. The weights on transmission energy and time are $w_e=1$ and $w_t=100$ respectively. The performance of the proposed algorithm will be compared with the following two baselines.
\begin{Baseline}
	The BS only ensures the segment delivery to the requesting users in each transmission. The cache nodes with better channel condition to the BS can also decode the file segments.
\end{Baseline}
\begin{Baseline}The BS ensures that all the cache nodes can decode the downlink file in the first transmission. Hence, all the cache nodes can help to forward the file since the second file request.
\end{Baseline}

The performance of the proposed low-complexity algorithm (Algorithm \ref{alg:AMDP}) is compared with the above two baselines in Fig.\ref{fig:caches}. In the simulation, the number of cache nodes is $ 20 $ and $25$ respectively, and the requesting users are uniformly distributed in the cell coverage, and distribution statistics is known to the BS. Hence, the analytical expressions derived in Section \ref{sub:app} can be used to calculate the approximated value functions. It can be observed that the proposed Algorithm \ref{alg:AMDP} is superior to the two baselines for any expected number of requests per file lifetime. Moreover, the Baseline 1 has better performance than Baseline 2 when the popularity of the file is high (larger expected number of file requests). 

\begin{figure} 
	\centering 
	\subfigure[$\widetilde{S} = \{\mathcal{B}^i_{f,s} = 0, \forall f,s, i\} $]{ \label{fig:bound_full}  
		\includegraphics[width=3in]{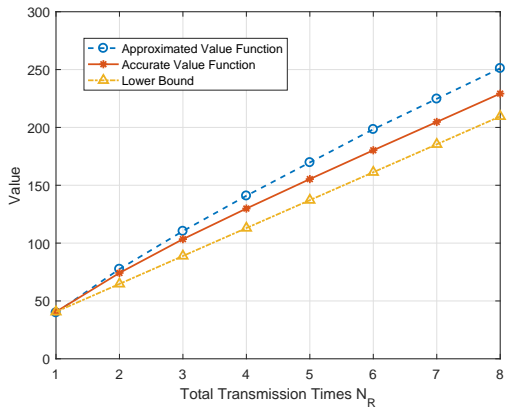}} 
	\hspace{0.2in} 
	\subfigure[$\widetilde{S} = \{\mathcal{B}^i_{f,s} = 0, \forall f,s, i=1,2,...,10\} \cup \{\mathcal{B}^i_{f,s} = 1, \forall f,s, i=11,12,...,20\}$]{ \label{fig:bound_half}
		\includegraphics[width=3in]{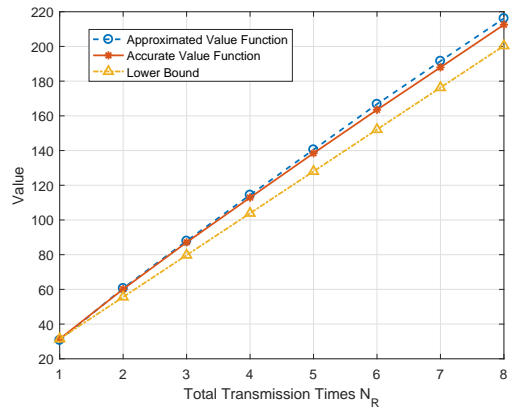}} 
	\caption{Illustration of value function and its bounds, where $ N_C = 20 $.} 
	\label{fig:bound} 
\end{figure}

The approximation error of the value function versus different stages is illustrated in Fig.\ref{fig:bound}, where the true value function and the bounds derived in Lemma \ref{lem:bound} are plotted. Fig. \ref{fig:bound_full} shows the state where all the cache nodes are empty, and Fig. \ref{fig:bound_half} shows the state where half of cache nodes have decoded the whole file. It is shown that for both states, both upper and lower bounds are tight, and therefore approximation error is small. 

\begin{figure} 
	\centering 
	\subfigure[$ 3 $ hot zones]{ \label{fig:learning(3)}  
		\includegraphics[width=3in]{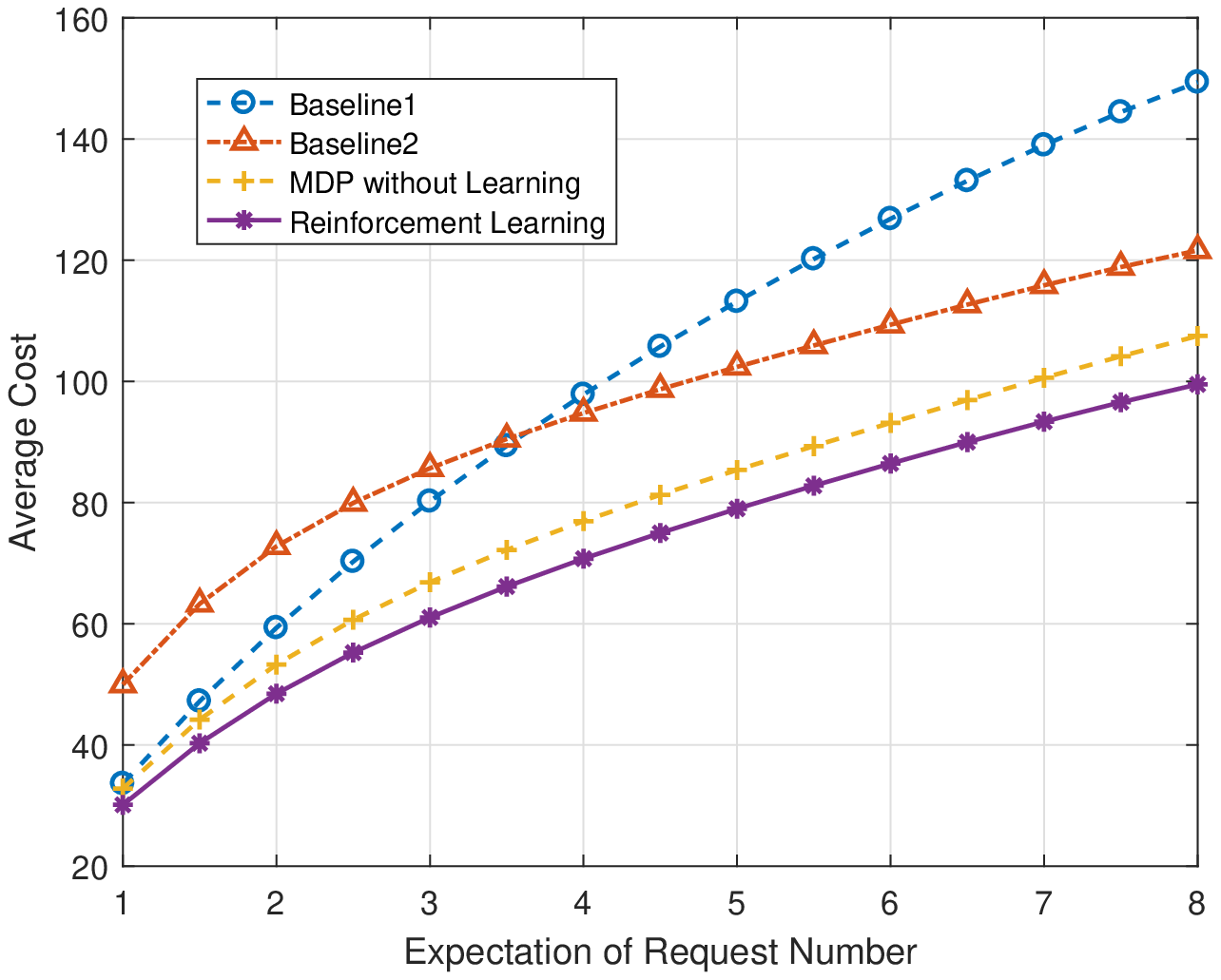}}
	\hspace{0.2in} 
	\subfigure[$ 4 $ hot zones]{ \label{fig:learning(4)}
		\includegraphics[width=3in]{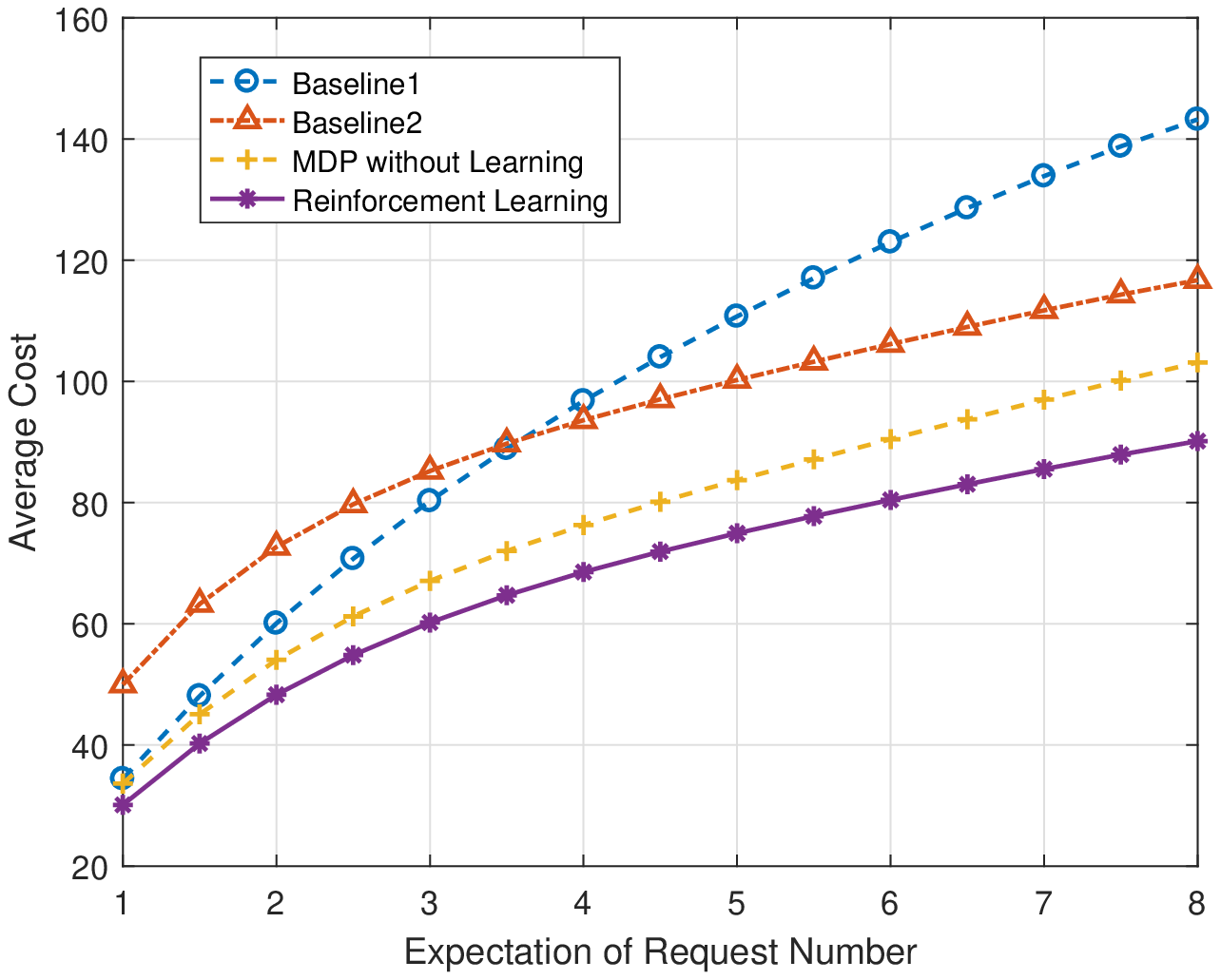}} 
	\caption{The average total cost versus the expectation of request times, where exits 3 hot zones in the cell.} 
	\label{fig:learn} 
\end{figure}

In Fig. \ref{fig:learn}, there are $ 3 $ and $ 4 $ hot zones in the cell coverage, each with radius $ 90 $ m. The statistics of users' distribution is unknown to the BS. The probability that the user appears on the one hot zone is $12.5\%$ (larger than the other regions). The performance of two baselines, the proposed Algorithm \ref{alg:AMDP} assuming users are uniformly distributed, and the proposed Algorithm \ref{alg:AMDP} with learning-based evaluation of value functions (Algorithm \ref{alg:learning}) are compared. It can be observed that the proposed  learning algorithm has the best performance. Moreover, the performance gain of the learning-based algorithm is larger for more hot zones.

\begin{figure}[tb]
	\centering
	\includegraphics[scale = 0.6]{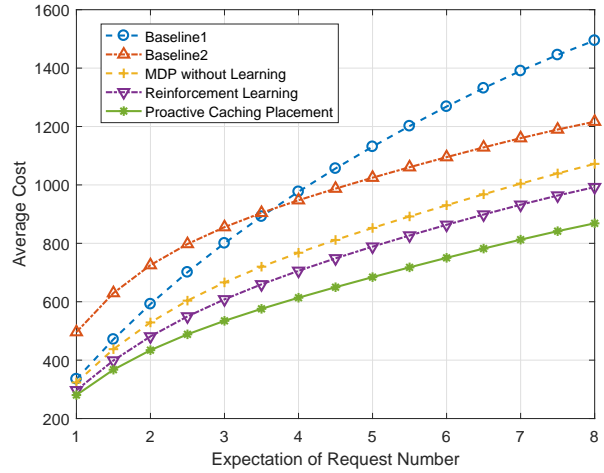}
	\caption{The Performance of Proactive Caching Placement}
	\label{fig:proactive}
\end{figure}

Finally, the performance of proactive caching placement is demonstrated in Fig. \ref{fig:proactive}, where there are 10 files, 3 hot zones in the cell, 50000 times of proactive transmission opportunities in the file's lifetime. The performance of the proactive caching placement algorithm is compared with the above two baselines, Algorithm \ref{alg:AMDP} assuming uniform user distribution, and Algorithm \ref{alg:AMDP} with learning-based evaluation of value functions. It can be observed that the proposed proactive caching placement algorithm can further improve the offloading performance, compared with the algorithms without it.

\section{Conclusion}\label{sec:con}
We consider the downlink file transmission with the assistance of cache nodes in this paper. The number of requests of one file within its lifetime is modeled as a Poisson point process, and the downlink resource minimization problem without proactive caching placement can be formulated as a MDP with random number of stages. We first propose a revised Bellman's equation, where the optimal control policy can be derived. In order to address the curse of dimensionality, we also introduce a low-complexity sub-optimal solution based on linear approximation of value functions. The approximated value function can be calculated analytically with the knowledge of distribution statistics of users. In practice, the statistics of the distribution may be unknown to the BS. We continue to propose a learning-based online algorithm to evaluate the value functions. Furthermore, we derive a bound on the gap between the approximated value functions and the real value functions. Finally, we propose a proactive caching placement algorithm, which can exploit the channel temporal diversity of shadowing effect. It is shown by numerical simulation that the proposed algorithms can significantly reduce the resource consumption at the BS, and the proactive caching placement algorithm can further improve the performance.

\section*{Appendix A: Proof Of Lemma \ref{Lemma:reduce_space}}
Let $\widetilde{V}_{N_R-n}(\widetilde{S}_{f,n})=\mathbb{E}_{\eta,\rho}[V_{N_R-n}(S_{f,n+1})] $, where the expectation is taken over the randomness of shadowing and requesting users' pathloss, we have
\begin{eqnarray}
&&\sum\limits_{S_{f,n+1}}V_{N_R-n}(S_{f,n+1})Pr(S_{f,n+1}|S_{f,n},\Omega_{f,n})=\sum\limits_{\widetilde{S}_{f,n+1}}{\widetilde{V}_{N_R-n}(\widetilde{S}_{f,n+1})Pr(\widetilde{S}_{f,n+1}|{S}_{f,n},\Omega_f)}\nonumber
\end{eqnarray}
Taking expectation with respect to the shadowing and pathloss on (\ref{eqn:bellman-fix}), we have
\begin{align}
\widetilde{V}_{N_R-n+1}(\widetilde{S}_{f,n}) &= \mathbb{E}_{\eta,\rho} \bigg\{ \min_{\Omega(S_{f,n})} \sum_{s} g_{f,n,s}(S_{f,n},\Omega_{f,n}) +\! \sum\limits_{S_{f,n+1}}{V_{N_R-n}(S_{f,n+1})Pr(S_{f,n+1}|S_{f,n},\Omega_{f,n})} \bigg\}\nonumber \\
&=\min_{\Omega(\widetilde{S}_{f,n})} \mathbb{E}_{\eta,\rho} \bigg\{\sum_{s} g_{f,n,s}(S_{f,n},\Omega_{f,n})+\sum\limits_{\widetilde{S}_{f,n+1}}{\widetilde{V}_{N_R-n}(\widetilde{S}_{f,n+1})Pr(\widetilde{S}_{f,n+1}|{S}_{f,n},\Omega_f)} \bigg\} \nonumber.
\end{align}

\section*{Appendix B: Proof Of Lemma \ref{Lemma:Segment} }
First of all, we have the following high SNR (signal-to-noise ratio) approximation on the throughput $ R_{f,n,s} $.
\begin{eqnarray}
R_{f,n,s} &\approx& N_{f,n,s} \mathbb{E}_{\mathbf{h}_{f,n,s}} \left[ \log_2 \left(  \frac{||\mathbf{h}_{f,n,s}||^2 P_{f,n,s}}{N_T \sigma^2_z} \right) \right]=N_{f,n,s}[\theta+\log_2(P_{f,n,s})]\nonumber. 
\end{eqnarray}
With $R_{f,n,s}= \frac{R_F}{N_S}$, we  have $
N_{f,n,s}=\frac{R_F}{N_S[\theta+\log_2(P_{f,n,s})]}$.
Hence the original optimization becomes
$$
\min \limits_{P_{f,n,s} }  \frac{R_F(w_eP_{f,n,s}+w_t)}{N_S[\theta+\log_2(P_{f,n,s})]} \nonumber
$$  
Taking first-order derivative on $f(P_{f,n,s})$, the optimal transmission power $ P_{f,n,s}^* $ can be obtained.

\section*{Appendix D: Proof Of Lemma \ref{Lemma:learning} }
We only prove the convergence of $ \widetilde{V}^t_{N_R-n+1}(\widetilde{S}_{f}^*) $, and the convergence of $ \widetilde{V}^t_{N_R-n+1}(\widetilde{S}_{f}^{i,s}) $ can be applied similarly.
Let $$\varepsilon_t = \widetilde{V}_{N_R-n+1}(\widetilde{S}_{f}^*) - (N_R-n+1)I(\mathbf{l}_{g,m} \notin \mathcal{C})
\sum_s (w_e P_{g,m,s}^*N_{g,m,s}^* + w_t N_{g,m,s}^*)$$ denotes the estimate error in $t$-th iteration. It is clear that the estimation errors are i.i.d. with respect to $ t $,  $\mathbb{E} [\varepsilon_t] =0$ and $Var [\varepsilon_t ] < +\infty$. Note that  $\widetilde{V}^t_{N_R-n+1}(\widetilde{S}_{f}^*)$ can be written as
\begin{eqnarray}
&&\widetilde{V}^t_{N_R-n+1}(\widetilde{S}_{f}^*)=\sum_{i=0}^{t} \frac{\widetilde{V}_{N_R-n+1}(\widetilde{S}_{f}^*)-\varepsilon_i}{t+1}=\widetilde{V}_{N_R-n+1}(\widetilde{S}_{f}^*)-\sum_{i=0}^{t}\frac{\varepsilon_i}{t+1} ,\nonumber
\end{eqnarray}
where the total estimate error is $\sum_{i=0}^{t}\frac{\varepsilon_i}{t+1}$. The mean and variance of total estimate error are summarized below:
\begin{eqnarray} 
&& \mathbb{E}\bigg\{     \sum_{i=0}^{t}\frac{\varepsilon_i}{t+1}                 \bigg\}=0, \nonumber \\
 && Var\bigg\{     \sum_{i=0}^{t}\frac{\varepsilon_i}{t+1}\bigg\} = \frac{Var[\varepsilon_i ]}{t+1}.    \nonumber     
\end{eqnarray} 
When $t\rightarrow +\infty $,  the variance of estimation error tends to zero, and $ \widetilde{V}^t_{N_R-n+1}(\widetilde{S}_{f}^*) $ converges to $ \widetilde{V}_{N_R-n+1}(\widetilde{S}_{f}^*) $.

\section*{Appendix E: Proof Of Lemma \ref{lem:bound} }
\subsubsection{Proof of Upper Bound}
The approach of mathematical induction will be used in the proof. Thus without loss of generality, we shall assume that  the upper bound holds when the first $ l $-th cache nodes have not decoded the $ (f,s) $-th segment, and prove that the upper bound also holds when the first $ (l+1) $-th cache nodes have not decoded the $ (f,s) $-th segment.
Define the system state $  \widetilde{T_f}^{c,s} = [\mathcal{B}^i_{f,j}=1,\forall j\neq s, \forall i] \cup[\mathcal{B}^i_{f,s}=0,\forall i=1,2,\cdots,c]\cup[\mathcal{B}^i_{f,s}=1,\forall i > c]$

\begin{itemize}
	\item Step 1: When $c=1$, the upper bound holds as follows
	\begin{eqnarray}
	\widetilde{V}_{N_R-n+1}(\widetilde{T_f}^{c,s}) = \widetilde{V}_{N_R-n+1}(\widetilde{S}_{f}^*) +\bigg( \widetilde{V}_{N_R-n+1}(\widetilde{S}_{f}^{c,s})-\widetilde{V}_{N_R-n+1}(\widetilde{S}_{f}^*)\bigg) \nonumber
	\end{eqnarray}
	
	\item Step 2: Suppose the following bound holds for $ c=l $
		\begin{equation}\label{eqn:Assumption1}
		\widetilde{V}_{N_R-n+1}(\widetilde{T_f}^{l,s}) \leq  \widetilde{V}_{N_R-n+1}(\widetilde{S}_{f}^*)+ \sum_{j=1,2,\cdots,l} \bigg( \widetilde{V}_{N_R-n+1}(\widetilde{S}_{f}^{c,j})-\widetilde{V}_{N_R-n+1}(\widetilde{S}_{f}^*)\bigg)
		\end{equation}
	
	\item Step 3: When $c=l+1$, we can apply the following sub-optimal control policy: (1) if the requesting users appear in the coverage of $ \mathcal{C}_1 \cup \mathcal{C}_2 \cup ... \cup \mathcal{C}_l $, the optimal control policy for system state $ \widetilde{T_f}^{l,s} $ is applied; (2) if the requesting users appear in the coverage of $ \mathcal{C}_{l+1} $, the optimal control policy for system state $ \widetilde{S}_f^{l+1, s} $ is applied; (3) if the requesting users appear outside the coverage of any cache nodes, choose the one from the above two policies with larger transmission resource consumption. Let $\breve{V}_{N_R-n+1}(\widetilde{T_f}^{l+1,s})  $ be the average cost of the above sub-optimal control policy, we have
	\begin{equation*}
	\widetilde{V}_{N_R-n+1}(\widetilde{T_f}^{l+1,s}) \!\leq\! \breve{V}_{N_R-n+1}(\widetilde{T_f}^{l+1,s}) \!\leq\! \widetilde{V}_{N_R-n+1}(\widetilde{T_f}^{l,s}) + \bigg( \widetilde{V}_{N_R-n+1}(\widetilde{S}_{f}^{c,l+1})-\widetilde{V}_{N_R-n+1}(\widetilde{S}_{f}^*)\bigg).
	\end{equation*}

\end{itemize}

Although the above proof is for the $ (f,s) $-th file segment, it can be trivially extended to arbitrary file segments. Thus the upper bound is proved.

\subsubsection{Proof of Lower Bound}
Let $\Omega_{f,n}^*$ be the optimal control policy, we have 
\begin{eqnarray}
&&\widetilde{V}_{N_R-n+1}(\widetilde{S}_{f,n})-\widetilde{V}_{N_R-n+1}(\widetilde{S}_{f}^*)\nonumber \\
&&=\mathbb{E}_{\eta,\rho} \bigg\{ \sum_{s} g_{f,n,s}(\widetilde{S}_{f,n},\Omega_{f,n}^*)\bigg\} +\mathbb{E}_{\eta,\rho} \bigg\{\sum\limits_{\widetilde{S}_{f,n+1}}{\widetilde{V}_{N_R-n}(\widetilde{S}_{f,n+1})Pr(\widetilde{S}_{f,n+1}|{S}_{f,n},\Omega_{f,n}^*)}\bigg\} \nonumber \\
&&-\mathbb{E}_{\eta,\rho} \bigg\{ \sum_{s} g_{f,n,s}(\widetilde{S}_{f}^*,\Omega_{f,n}^*)\bigg\}-\widetilde{V}_{N_R-n}(\widetilde{S}_{f}^*) \nonumber.
\end{eqnarray}
Because $
\mathbb{E}_{\eta,\rho} \bigg\{ \sum_{s} g_{f,n,s}(\widetilde{S}_{f,n},\Omega_{f,n}^*)\bigg\}\geq \mathbb{E}_{\eta,\rho} \bigg\{ \sum_{s} g_{f,N_R,s}(\widetilde{S}_{f,n},\Omega_{f,N_R}^*)\bigg\}$\\  and $\mathbb{E}_{\eta,\rho} \bigg\{ \sum_{s} g_{f,n,s}(\widetilde{S}_{f}^*,\Omega_{f,n}^*)\bigg\}= \mathbb{E}_{\eta,\rho} \bigg\{ \sum_{s} g_{f,N_R,s}(\widetilde{S}_{f}^*,\Omega_{f,N_R}^*)\bigg\}, $
We have
\begin{eqnarray}
&&\mathbb{E}_{\eta,\rho} \bigg\{ \sum_{s} g_{f,n,s}(\widetilde{S}_{f,n},\Omega_{f,n}^*)\bigg\} -\mathbb{E}_{\eta,\rho} \bigg\{ \sum_{s} g_{f,n,s}(\widetilde{S}_{f}^*,\Omega_{f}^*)\bigg\}\nonumber \\
&&\geq\mathbb{E}_{\eta,\rho} \bigg\{ \sum_{s} g_{f,N_R,s}(\widetilde{S}_{f,n},\Omega_{f,N_R}^*)\bigg\} -\mathbb{E}_{\eta,\rho} \bigg\{ \sum_{s} g_{f,N_R,s}(\widetilde{S}_{f}^*,\Omega_{f,N_R}^*)\bigg\}\nonumber \\
&&=\sum_{{\{(i,s)|\forall \mathbf{B}^i_{f,s}(\widetilde{S}_{f,n}) = 0\}}} \bigg( \widetilde{V}_{1}(\widetilde{S}_{f}^{i,s})-\widetilde{V}_{1}(\widetilde{S}_{f}^*)\bigg).\nonumber
\end{eqnarray}
We also have 
\begin{eqnarray}
\mathbb{E}_{\eta,\rho} \bigg\{\sum\limits_{\widetilde{S}_{f,n+1}}{\widetilde{V}_{N_R-n}(\widetilde{S}_{f,n+1})Pr(\widetilde{S}_{f,n+1}|{S}_{f,n},\Omega_{f,n}^*)}\bigg\} -\widetilde{V}_{N_R-n}(\widetilde{S}_{f}^*)\geq 0. \nonumber
\end{eqnarray}
As a result, the lower bound is straightforward.

\bibliographystyle{IEEEtran}
\bibliography{icc}

\end{document}